\newtheorem{theorem}{Theorem}[section]
\newtheorem{proposition}[theorem]{Proposition}
\newtheorem{observation}[theorem]{Observation}
\newtheorem{definition}[theorem]{Definition}
\newtheorem{corollary}[theorem]{Corollary}
\newtheorem{lemma}[theorem]{Lemma}
\newtheorem{example}[theorem]{Example}
\newtheorem{remark}[theorem]{Remark}
\newenvironment{proof}{\noindent \textit{Proof. }}{\hfill $\blacktriangleleft$\medskip}
\newenvironment{proofof}[1]{\noindent\textit{Proof of {#1}. }}{\hfill $\blacktriangleleft$\medskip}
\newtheorem{clm}{Claim}
\newcommand{\raus}[1]{}
\def\tu#1{\mathbf{#1}}
\newcommand{\calG}{{\mathscr{G}}}
\newcommand{\calA}{{\mathcal{A}}}
\newcommand{\calC}{{\mathcal{C}}}
\newcommand{\calH}{{\mathcal{H}}}
\newcommand{\calP}{{\mathcal{P}}}
\newcommand{\calR}{{\mathcal{R}}}
\newcommand{\calT}{{\mathcal{T}}}
\newcommand{\fhw}{\mathrm{fhw}}
\newcommand{\var}[1]{\textsf{var}({#1})}
\newcommand{\free}[1]{\textsf{free}({#1})}
\newcommand{\atom}[1]{\textsf{atom}({#1})}
\newcommand{\sP}{\mathbf{\#P}}
\newcommand{\NP}{\mathbf{NP}}
\newcommand{\sW}[1]{\mathbf{\#W[#1]}}
\newcommand{\W}[1]{\mathbf{W[#1]}}
\newcommand{\FPT}{\mathbf{FPT}}
\newcommand{\CQ}{\mathrm{CQ}}
\newcommand{\ACQ}{\mathrm{ACQ}}
\newcommand{\sCQ}{\mathrm{\#CQ}}
\newcommand{\classe}[1]{\mathbf{c}(#1)}
\newcommand{\comment}[2]{}
\newcommand{\stefan}[1]{\comment{\color{red}Stefan}{\color{red}#1}}
\begin{document}

\title{Structural Tractability of Counting of Solutions to Conjunctive Queries}


 \author{Arnaud Durand\thanks{Partially supported by ANR-11-IS02-0003, project ALCOCLAN}\\ 
 IMJ UMR 7586  -  Logique\\
 Université Paris Diderot\\
 F-75205 Paris, France  \\
  {\small \texttt{durand@math.univ-paris-diderot.fr}}
 \and
 Stefan Mengel\thanks{Partially supported by DFG grants BU 1371/2-2 and BU 1371/3-1.  Furthermore, the research leading to these results has received funding from the [European Community's] Seventh Framework Programme [FP7/2007-2013] under grant agreement n° 238381}\\Institute of Mathematics\\ University of Paderborn\\ D-33098 Paderborn, Germany\\ {\small\texttt{smengel@mail.uni-paderborn.de}} 
 }



\maketitle

\begin{abstract}
In this paper we explore the problem of counting solutions to conjunctive queries.
We consider a parameter called the \emph{quantified star size} of a formula~$\varphi$ which measures how the free variables are spread in $\varphi$. We show that for conjunctive queries that admit nice decomposition properties (such as being of bounded treewidth or generalized hypertree width) bounded quantified star size exactly characterizes the classes of queries for which counting the number of solutions is tractable. This also allows us to fully characterize the conjunctive queries for which counting the solutions is tractable in the case of bounded arity. 
To illustrate the applicability of our results, we also show that computing the quantified star size of a formula is possible in time $n^{O(k)}$ for queries of generalized hypertree width $k$. Furthermore, quantified star size is even fixed parameter tractable parameterized by some other width measures, while it is $\W{1}$-hard for generalized hypertree width and thus unlikely to be fixed parameter tractable. We finally show how to compute an approximation of quantified star size in polynomial time where the approximation ratio depends on the width of the input.
\end{abstract}

%


\section{Introduction}

Conjunctive queries (CQs) are a fundamental class of logical queries that consist of evaluating an existential conjunctive first-order formula over a finite structure. They admit a number of equivalent formulations for example as select-project-join queries in database theory or as homomorphism problems in constraint satisfaction and thus have been extensively studied in various contexts.
Deciding if a Boolean CQ is true or not on a structure is well known to be $\NP$-complete, so the main interest of study has been to identify tractable subclasses, so-called ``islands of tractability'', where the decision question is tractable, i.e.\ can be solved in polynomial time.

One main direction in finding tractable classes of CQs has been imposing  structural restrictions on the formula of the query -- more exactly on the hypergraph associated to it -- while the database is assumed to be arbitrary. In a seminal paper Yannakakis \cite{Yannakakis81} proved that if the formula is acyclic, then the Boolean CQ question becomes tractable. The main idea behind structural restrictions is to extend this result by generalizing it to ``nearly acyclic'' queries. This has lead to many different decompositions for graphs and hypergraphs and associated width measures (see e.g.\ \cite{GottlobLS00,CJG08,miklos08}). The common approach for these decompositions is to group together vertices or edges (of the graphs or hypergraphs)  into clusters of some fixed constant size and to arrange these clusters into a tree. The resulting width measures are often sought to have two desirable properties:
\begin{itemize}
 \item For every $k$ the class of queries of width $k$ should be tractable, i.e.\ Boolean CQ should be solvable in polynomial time.
 \item Given an instance it should be possible to decide if there is a decomposition of width $k$ and construct one if it exists.
\end{itemize}

While decomposition techniques without the first property do not make any sense in the context of CQs, the second property is sometimes relaxed. For some decomposition techniques one does not actually need the decomposition to solve the Boolean query problem \cite{ChenD05}, a promise of the existence is enough. For other decompositions one only knows approximation algorithms that construct decompositions of width that is near the optimal width, which is enough to guarantee tractability of Boolean CQ~\cite{Marx10,AdlerGG07}.

More recently there has also been interest in enumerating all solutions to CQs and in the corresponding counting question. For enumeration of the query answers it turns out that the picture is less clear than for decision \cite{BDG-07,BulatovDGM12,GrecoS10}. Also the situation for counting is more subtle: For quantifier free queries -- which correspond to queries without projections in the database perspective -- most commonly considered structural restrictions yield tractable counting problems (see, e.g.~\cite{PS-11}). While this is nice it is not fully satisfying, because quantifiers/projections are very natural and essential in database queries. While introducing projections does not make any difference for the complexity of Boolean CQ, the situation for the associated counting problem, denoted $\sCQ$,  is dramatically different. In~\cite{PS-11} it is shown that even one single existentially quantified variable is enough to make counting answers to CQs $\sP$-hard even when the structure of the query is a tree (which implies width $1$ for all commonly considered decomposition techniques). This underlines the gain of expressive power obtained by existential quantification in the context of counting. It also follows that the decomposition techniques used for Boolean CQ are not enough to guarantee tractability for counting.

In a previous paper \cite{ouroldpaper} the authors of this paper have proposed a way out of this dilemma for counting by introducing a parameter called \emph{quantified star size} for acyclic conjunctive queries (ACQs). This parameter measures how the free variables are spread in the formula. 
We represented a query formula $\varphi(\tu x)$ with a list $\tu x$ of free variables, by extending the hypergraph $\calH=(V,E)$ associated to $\varphi(\tu x)$ with a set $S\subseteq V$. 
Then the quantified star size is the size of a maximum independent set consisting of vertices from the set $S$ in some specified subhypergraphs of $\calH$. 
It turns out that this measure  precisely characterizes the tractable subclasses of ACQs. The main result is that (under the widely believed assumption $\FPT \ne \sW{1}$ from  parameterized complexity) solutions to a class of ACQs can be counted in polynomial time if and only if the queries in the class are of bounded quantified star size.

\subsection*{Overview of the results}

\paragraph*{Counting solutions to queries}

In this paper we extend the results of \cite{ouroldpaper} from acyclic queries to commonly considered decomposition techniques. To do so we generalize the notion of quantified star size from acyclic queries to general conjunctive queries. We show that every class of CQs that allows efficient counting must be of bounded quantified star size -- again under the same assumption from parameterized complexity. We then go on showing that for all decomposition techniques for CQs commonly considered in the literature combining them with bounded quantified star size leads to tractable counting problems. The key feature that makes this result work is the organization of atoms into a tree of clusters that is prominent in all decomposition methods for $\CQ$s known so far.
Combining the results above we get an exact characterization of the classes of tractable CQ counting problems for commonly considered decomposition techniques. Let us illustrate these results for the example of generalized hypertree decomposition~\cite{GottlobLS00}, which is one of the most general  decomposition methods and one of the most studied too~\cite{GottlobLS00,GottlobMS09,miklos08}.
We have that, under the assumption that  $\FPT \ne \sW{1}$, for any (recursively enumerable) class $\calC$ of hypergraphs of bounded generalized hypertreewidth the following statements are equivalent:
\begin{itemize}
\item $\sCQ$ for instances in $\calC$ can be solved in polynomial time 
\item $\calC$ is of bounded quantified star size.
\end{itemize}

In our considerations, the arity of atoms of queries is not a priori bounded. In this setting, there is no known \textit{ultimate} measure resulting from a decomposition method that fully characterizes tractability even for Boolean $\CQ$. This explains why our characterizations are stated for \emph{each} decomposition method. For bounded arity however, the situation is  different. It is well known that being of bounded treewidth completely characterizes tractability for decision~\cite{GroheSS01,Grohe07} and counting~\cite{DalmauJ04} for CSP (corresponding to quantifier free conjunctive queries in this setting). Combining~\cite{GroheSS01,Grohe07}  and our results from above we derive a complete characterization of tractability for $\sCQ$ in terms of tree width and quantified star size for the bounded arity case.

Note that our results are for counting with set semantics, i.e.\ we count each solution only once. Counting for bag semantics in which multiple occurences of identical tuples are counted has already been essentially solved in~\cite{PS-11}. 

\paragraph*{Discovering quantified star size} To exploit tractability results of the above kind it is helpful if the membership in a tractable class can be decided efficiently, i.e.\ in our case if computing the quantified star size is also tractable. In the second part of the paper, we turn to these  ``discovery problems'' of determining the quantified star size of queries. 

In \cite{ouroldpaper} it is shown that quantified star size of acyclic CQs can be determined in polynomial time. Since star size is equivalent to  independent sets, we cannot expect this to be true on more general queries anymore. Fortunately, it turns out that 
for queries of  generalized hypertree width $k$, there is a $n^k$ algorithm that computes the quantified star size. We show that this is in a sense optimal, because under the assumption $\FPT\ne \W{1}$ there is no efficient (fixed paramater tractable in $k$) algorithm computing the quantified star size for queries parameterized by generalized hypertree width. 

Still some natural decomposition methods admit fixed parameter discovery algorithms. We prove that this is the case for the class of $\CQ$ having  bounded hingetree width (see~\cite{CJG08}). This result is interesting on his own  from a  hypergraph algorithms perspective. Because of the connection between star size and maximum independent set, it provides a new class of hypergraphs for which computing the maximum independent set is $\FPT$. Note that the preceding hardness result shows that fixed parameterized tractability of this problem is unlikely for other hypergraph decomposition techniques.

We then turn our attention to star size approximation.  We show that there is a polynomial time approximation algorithm with ratio $k$ that given a decomposition of width~$k$ runs in time independent of $k$. 

Summing these results up, quantified star size does not only imply tractable counting if combined with well known decomposition techniques, but in case the decomposition is given or can be efficiently computed (hypertreewidth, hingetree width) or approximated (generalized hypertreewidth), then computing quantified star size is itself tractable.

Finally, we investigate the problem of counting solution and computing quantified star size 
for queries of bounded fractional hypertree width~\cite{GroheMarx06, Marx10}. This decomposition method is of a somewhat different nature than the ones studied before so we treat it individually. We again prove that counting is tractable in this setting and that the discovery problem can be decided in $O(n^{k^{O(1)}})$ i.e.\ with a slightly bigger dependency in $k$ than before. 
 
\section{Preliminaries}

\paragraph*{Conjunctive queries} We assume the reader to be familiar with the basics of (first order) logic  (see ~\cite{Libkin-04}). We assume all formulas to be in prenex form. If $\phi$ is a first order formula, $\var{\phi}$ denotes the set of its variables, $\free{\phi}\subseteq  \var{\phi}$ the set of its free variables and $\atom{\phi}$ the set of its atomic formulas. Let $\tu x=x_1,...,x_k$, we denote $\phi(\tu x)$ the formula with free variables  $\tu x$. If $\phi$ is such that  $\free{\phi}=\var{\phi}$ then $\phi$ is said to be \textit{quantifier-free}.
The \textit{Boolean query problem} $\Phi=(\calA,\phi)$ associated to a formula $\phi(\tu x)$  and a structure $\calA$, asks whether the set
\[\phi(\calA)=\{\tu a : (\calA, \tu a)\models \phi(\tu x)\}\]
\noindent called the \textit{query result} is empty or not. The (general) query  problem consists of computing the set $\phi(\calA)$, while the corresponding counting problem is computing the size of $\phi(\calA)$, denoted by $|\phi(\calA)|$. We call two instances $\Phi= (\calA, \phi), \Phi' = (\calA', \phi')$ solution equivalent, if $\free{\phi} = \free{\phi'}$ and $\phi(\calA) = \phi'(\calA')$. When $\phi$ is a $\{\exists,\wedge\}$-first order formula the boolean query problem is known as the \textit{Conjunctive Query Problem}, $\CQ$ for short.  It is well known that the the Boolean $\CQ$ problem is 
$\NP$-complete.
We denote by $\sCQ$ the associated counting problem: given a query instance $\Phi=(\calA,\phi)$, return $|\phi(\calA)|$.

Any $\tu a\in \phi(\calA)$  will be alternatively  seen as an assignment $\tu a: \free{\phi}\rightarrow D$ or as a tuple of dimension $|\free{\phi}|$. Two assignments $\tu a$ and $\tu a'$ are \textit{compatible} (symbol: $\tu a \sim \tu a'$) if they agree on their common variables.

\begin{definition} Let $\phi(\tu x,\tu y)$, $\psi(\tu y,\tu z)$ be two conjunctive queries with  $\tu x\cap\tu z = \emptyset$  and let $\calA,\calA'$ be two finite structures. The the natural join of $\phi$ and $\psi$ is $\phi(\calA)\bowtie \psi(\calA') := \{(\tu a,\tu b,\tu c): \ (\tu a,\tu b)\in \phi(\calA)  \mbox{ and } (\tu b,\tu c)\in \psi(\calA')\}$
\end{definition}

When $\calA=\calA'$, $\phi(\calR)\bowtie \psi(\calA)$ is simply $[\phi\wedge\psi](\calA)$.

\paragraph*{Query size and Model of computation} The underlying model of computation for our algorithms will be the RAM model with unit costs. We assume the relations of a conjunctive query to be encoded by listing their tuples. For a relation $\calR$ let $\mathrm{arity}(\calR)$ denote the arity of $\calR$ and $|\calR|$ the number of tuples in $\calR$. Then the size of an encoding of $\calR$ is $\|\calR\| := \Theta(\mathrm{arity}(\calR) \cdot |\calR|)$. For a vocabulary $\tau$ let $|\tau|$ be the number of predicate symbols. Finally, let $|D|$ be the size of a domain $D$. Then encoding a structure $\calA$ over the vocabulary $\tau$ with domain $D$ takes space $\|\calA\| := |\tau| + |D| +\sum_{\calR\in \tau} \|\calR^{\calA}\|$.

Furthermore, it takes space $\|\phi\| :=\Theta(\sum_{P\in \atom{\phi}} \mathrm{arity(P)})$ to encode a formula $\phi$. The size of an encoding of a CQ instance $\Phi = (\phi, \calA)$ is then $\|\Phi\| := \|\phi\| + \|\calA\|$.

For a detailed discussion and justification of these conventions see \cite[Section~2.3]{FlumFG02}

\paragraph*{Parameterized complexity}

This section is a very short introduction to some notions from parameterized complexity used in the remainder of this paper (for more details see~\cite{FlumGrohe06}). 

A parameterized decision problem over an alphabet $\Sigma$ is a language $L\subseteq \Sigma^*$ together with a computable parameterization $\kappa : \Sigma^* \rightarrow \mathbb{N}$. The problem $(L, \kappa)$ is said to be fixed parameter tractable, or $(L, \kappa)  \in \FPT$, if there is a computable function $f:\mathbb{N}\rightarrow \mathbb{N}$ such that there is an algorithm that decides for $x\in \Sigma^*$ in time $f(\kappa(x)) |x|^{O(1)}$ if $x$ is in $L$.

Let $(L,\kappa)$ and $(L', \kappa')$ be two parameterized decision problems over the alphabets $\Sigma$ resp.\ $\Pi$. 
A parameterized many-one reduction from $(L, \kappa)$ to $(L', \kappa')$ is a function $r: \Sigma^* \rightarrow \Pi^*$ such that for all $x\in \Sigma^*$:
\begin{itemize}
 \item $x\in L \Leftrightarrow r(x)\in L'$,
 \item $r(x)$ can be computed in time $f(\kappa(x)) |x|^{c}$ for a computable function $f$ and a constant $c$, and 
 \item $\kappa'(r(x)) \le g(\kappa(x))$ for a computable function $g$.
 \end{itemize}

It is easy to see that $\FPT$ is closed under parameterized many-one reductions. 

Let $p$-$\mathrm{Clique}$ be the problem of deciding on an input $(G,k)$ where $G$ is a graph and $k$ and integer, if $G$ has a $k$-clique. Here the parameterization $\kappa$ is simply defined by $\kappa(G,k) := k$. The class $\W{1}$ consists of all parameterized problems that are parameterized many-one reducible to $p$-$\mathrm{Clique}$. A problem $(L,\kappa)$ is called $\W{1}$-hard, if there is a parameterized many-one reduction from $p$-$\mathrm{Clique}$ to $(L,\kappa)$.

It is widely believed that $\FPT \ne \W{1}$ and thus in particular $p$-$\mathrm{Clique}$ and all $\W{1}$-hard problems are not fixed parameter tractable. 

Parameterized counting complexity theory is developed similarly to decision complexity. 
A parameterized counting problem is a function $F: \Sigma^* \times \mathbb{N} \rightarrow \mathbb{N}$, for an alphabet $\Sigma$. Let $(x,k)\in \Sigma^*\times \mathbb{N}$, then we call $x$ the input of $F$ and $k$ the parameter. A parameterized counting problem $F$ is fixed parameter tractable, or $F\in \mathbf{FPT}$, if there is an algorithm computing $F(x,k)$ in time $f(k)\cdot |x|^c$ for a computable function $f:\mathbb{N}\rightarrow \mathbb{N}$ and a constant $c\in \mathbb{N}$.

Let $F:\Sigma^*\times \mathbb{N} \rightarrow \mathbb{N}$ and $G:\Pi^*\times \mathbb{N} \rightarrow \mathbb{N}$ be two parameterized counting problems. A parameterized parsimonious reduction from $F$ to $G$ is an algorithm that computes for every instance $(x,k)$ of $F$ an instance $(y,l)$ of $G$ in time $f(k)\cdot |x|^c$ such that $l\le g(k)$ and $F(x,k) = G(y,l)$ for computable functions $f,g:\mathbb{N}\rightarrow \mathbb{N}$ and a constant $c\in \mathbb{N}$. A parameterized $T$-reduction from $F$ to $G$ is an algorithm with an oracle for $G$ that solves any instance $(x,k)$ of $F$ in time $f(k)\cdot |x|^c$ in such a way that for all oracle queries the instances $(y,l)$ satisfy $l\le g(k)$ for computable functions $f,g$ and a constant $c\in \mathbb{N}$. 

Let $p$-$\mathrm{\#Clique}$ be the problem of counting $k$-cliques in a graph where $k$ is the parameter and the graph is the input. A parameterized problem $F$ is in $\sW{1}$ if there is a parameterized parsimonious reduction from $F$ to $p$-$\mathrm{\#Clique}$. $F$ is $\sW{1}$-hard, if there is a parameterized $T$-reduction from $p$-$\mathrm{\#Clique}$ to $F$. As usual, $F$ is $\sW{1}$-complete if it is in $\sW{1}$ and hard for it, too.

Again, it is widely believed that there are problems in $\sW{1}$ (in particular the complete problems) that are not fixed parameter tractable. Thus, from showing that a problem $F$ is $\sW{1}$-hard it follows that $F$ can be assumed to be not fixed parameter tractable.

\paragraph*{Hypergraph decompositions}\label{sct:hypergraphs}

In this section we present some well known hypergraph decompositions methods. For more details and more decomposition techniques see e.g.\ \cite{CJG08,GottlobLS00,miklos08}.

 A (finite) hypergraph  $\calH$ is a pair $(V,E)$ where $V$ is a finite set and $E\subseteq \calP(V)$. We associate a hypergraph $\calH=(V,E)$ to a formula $\phi$ (the \textit{canonical} structure describing $\phi$) by setting $V:= \var{\phi}$ and $E:= \{\var{a}\mid a \in \atom{\phi} \}$.
 
 \begin{example}\label{ex:formula}
 Consider the formula \begin{eqnarray*}\phi &:=& \exists u_1 \exists u_2 \exists u_3 \exists u_4 \exists u_5 \exists u_6 \exists u_7 \exists u_8  \\ && P_1(v_1, u_1) \land P_2(v_2, u_1, u_2)\land P_3(v_2, v_4, u_2, u_3)\\&&\land P_4(v_3, v_4, v_5, u_3, u_4, u_5)\land P_5(v_4, v_5, v_6, v_8) \\&&\land P_6(v_7, v_8, u_5, u_6)\land P_2(v_6, v_9, u_7)\land P_2(v_8, v_9, u_8)\end{eqnarray*}
 The associated hypergraph is illustrated in Figure \ref{fig:hypergraph}.
 \end{example}

 \begin{figure}[t]
 \begin{center}
 \includegraphics[scale=0.6]{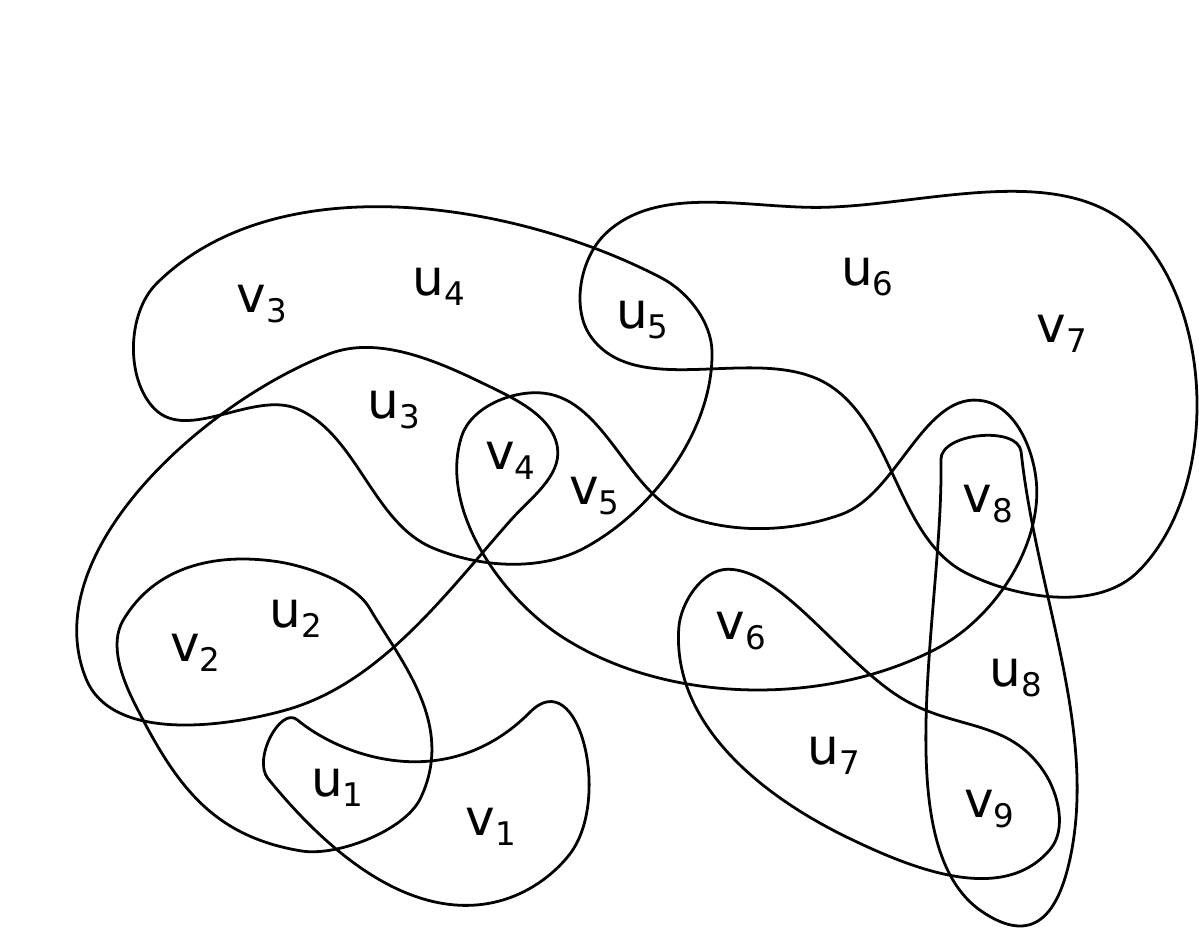}
 \caption{The hypergraph associated to the formula~$\phi$ of Example \ref{ex:formula}.}\label{fig:hypergraph}
 \end{center}
 \end{figure}

An independent set $I$ in $\calH$ is a set of vertices $I\subseteq V$ such that no two of them lie in one edge together. An edge cover $C$ of $\calH$ is an edge set $E' \subseteq E$ such that $\bigcup_{e\in E'} e = V$.
\begin{definition}
 A \emph{generalized hypertree decomposition} of a hypergraph $\calH = (V,E)$ is a triple $(\calT, (\lambda_t)_{t\in T}, (\chi_t)_{t\in T})$ where $\calT = (T,F)$ is a rooted tree and $\lambda_t\subseteq E$ and $\chi_t \subseteq V$ for every $t\in T$ satisfying the following properties:
 \begin{enumerate}
  \item For every $v\in V$ the set $\{t\in T \mid v\in \chi_t\}$ induces a subtree of $\calT$.
  \item For every $e\in E$ there is a $t\in T$ such that $e\subseteq \chi_t$.
  \item For every $t\in T$ we have $\chi_t \subseteq \bigcup_{e\in \lambda_t} e$.
 \end{enumerate}
The first property is called the \emph{connectedness condition}. The sets $\chi_t$ are called \emph{blocks} or \emph{bags} of the decomposition, while the sets $\lambda_t$ are called the \emph{guards} of the decomposition. A pair $(\lambda_t, \chi_t)$ is called \emph{guarded block}.

The \emph{width} of a decomposition $(\calT, (\lambda_t)_{t\in T}, (\chi_t)_{t\in T})$ is defined as $\max_{t\in T}(|\lambda_t|)$. The generalized hypertree width of $\calH$ is the minimum width over all generalized hypertree decompositions of $\calH$.
\end{definition}

We sometimes identify a guarded block $(\lambda_t, \chi_t)$ with the vertex $t$.

\begin{example}
 Figure \ref{fig:decomp} shows a generalized hypertree decomposition of width $3$ for the hypergraph from Figure \ref{fig:hypergraph}.
\end{example}

\begin{figure}[t]
 \begin{center}
 \includegraphics[scale=0.6]{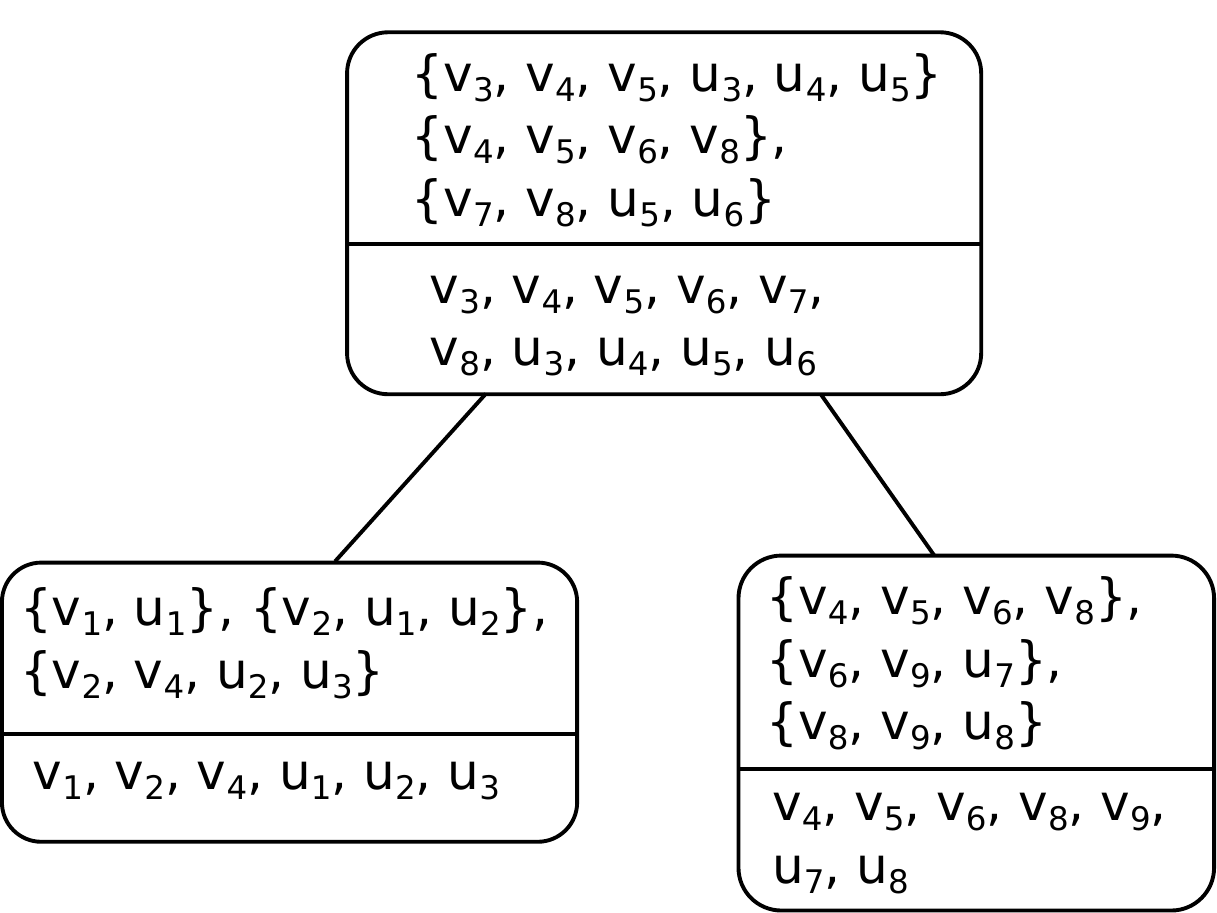}
 \caption{A generalized hypertree decomposition of width $3$ for the hypergraph from Figure \ref{fig:hypergraph}. The boxes are the guarded blocks. In the upper parts the guards are given while the lower parts show the blocks.}\label{fig:decomp}
 \end{center}
 \end{figure}

\begin{definition}
 A hypergraph is acyclic if it has generalized hypertree width $1$.
In this case, the decomposition restricted to its blocks is called a \textit{join tree}.
\end{definition}

Let us fix some notation: For an edge set $\lambda \subseteq E$ we use the shorthand $\bigcup \lambda := \bigcup_{e\in \lambda} e$. For a decomposition $(\calT, (\lambda_t)_{t\in T}, (\chi_t)_{t\in T})$ we write $\calT_t$ for the subtree of $\calT$ that has $t$ as its root. We also write $\chi(\calT_t) := \bigcup_{t'\in V(\calT_t)} \chi_{t'}$.

\begin{definition}
 A generalized hypertree decomposition is called \emph{hingetree decomposition}\footnote{Note that this is not the original definition from \cite{GyssensJC94} but an alternative, equivalent definition from \cite{CJG08}.} if it satisfies the following conditions:
 \begin{enumerate}
  \item[4.] For each pair $t_1,t_2\in T$ with $t_1\ne t_2$ there are edges $e_1 \in \lambda_{t_1}$ and $e_2\in \lambda_{t_2}$ such that $\chi_{t_1} \cap \chi_{t_2} \subseteq e_1\cap e_2$.
  \item[5.] For each $t\in T$ we have $\bigcup \lambda_t = \chi_t$.
  \item[6.] For each $e\in E$ there is a $t\in T$ such that $e\in \lambda_t$.
 \end{enumerate}
\end{definition}

Hingetree width (also called degree of cyclicity) is defined analogously to generalized hypertree width.

\begin{example}
 The decomposition from Figure \ref{fig:decomp} is also a hingetree decomposition.
\end{example}

\begin{definition}
 The \emph{primal graph} of a hypergraph $\calH=(V,E)$ is the graph $\calH_P = (V, E_p)$ with $E_p := \{uv \in \binom{V}{2} \mid \exists e \in E: u,v\in E\}$. 
\end{definition}

\begin{definition}
 A \emph{tree decomposition} of a hypergraph $\calH$ is a generalized hypertree decomposition of its primal graph $\calH_P$. The width of a tree decomposition is the size of its biggest bag minus $1$. The \emph{treewidth} of $\calH$ is the minimum width over all tree decompositions of $\calH$.
\end{definition}

For all decompositions defined above we define the width of a $\CQ$-instance to be the width of the associated hypergraph.

We now recall some known results on the various decomposition methods.
\begin{lemma}
 \begin{enumerate}
  \item[a)] (see e.g.\ \cite{CJG08}) For all of the width measures defined above Boolean $\CQ$-instances of width $k$ can be solved in time $\|\Phi\|^{p(k)}$ for a polynomial $p$.
  \item[b)] (\cite{GyssensJC94}) There is an algorithm that given a hypergraph $\calH=(V,E)$ computes a minimum width hingetree decomposition 
  in time $|V|^{O(1)}$.
  \item[c)] (\cite{Bodlaender93})Computing 
  minimum width tree decompositions is fixed parameter tractable parameterized by the treewidth.
  \item[d)] (\cite{AdlerGG07,GottlobLS02}) There is an algorithm that given a hypergraph $\calH=(V,E)$ of generalized hypertree width $k$ constructs a generalized hypertree decomposition of width~$O(k)$ of~$\calH$ in time $|V|^{O(k)}$. 
 \end{enumerate}
\end{lemma}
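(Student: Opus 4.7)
The statement bundles four known facts; (b)--(d) are pointers to existing constructions, so the genuine algorithmic work is part (a), which I would handle uniformly by dynamic programming on the decomposition tree.

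For (a), fix a decomposition $(\calT, (\lambda_t)_{t\in T}, (\chi_t)_{t\in T})$ of width $k$ and root $\calT$ arbitrarily. Assign each atom of $\phi$ to some node whose bag covers its variables (possible by property~2). Processing $\calT$ bottom up, I maintain at each node $t$ a relation $R_t$ over the variables in $\chi_t$ consisting of the projections onto $\chi_t$ of the assignments to $\bigcup_{t'\in V(\calT_t)} \chi_{t'}$ that satisfy all atoms placed below $t$. The key bound is $|R_t| \le \|\calA\|^{|\lambda_t|} \le \|\calA\|^k$, which holds because $\chi_t \subseteq \bigcup \lambda_t$, so every tuple in $R_t$ is determined by a choice of one tuple from each guard relation; enumerating these candidates and keeping only those compatible with the atoms placed at $t$ costs $\|\Phi\|^{O(k)}$. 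The inductive step computes $R_t$ as the natural join of the local relation with, for each child $t'$, the projection of $R_{t'}$ onto $\chi_t \cap \chi_{t'}$. The connectedness condition ensures this projection carries all information needed from below, because a variable leaving the bag going upward never returns. The Boolean answer is ``yes'' iff $R_r \neq \emptyset$ at the root, and the whole procedure runs in $\|\Phi\|^{p(k)}$ for a polynomial $p$. For treewidth one applies the same argument to the primal graph, taking the bag size $k+1$ as the exponent instead of the number of guards.

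For (b), I would invoke the Gyssens--Jeavons--Cohen construction: hinges correspond to maximal sub-hypergraphs that cannot be further split by a single separating edge, and they can be extracted by iterated refinement, each step polynomial in $|V|$, after which they are assembled into a tree using the edge-intersection structure of the refinement. Part~(c) is Bodlaender's linear-time algorithm, which I would cite verbatim; its proof proceeds by iterative graph compression exploiting the $O(k)$-sized separators of bounded-treewidth graphs. For (d) I would follow Adler--Gottlob--Grohe: argue that a balanced separator of the primal graph covered by $O(k)$ edges must exist because the root bag of any optimal decomposition yields one, find such a separator by enumerating all $|E|^{O(k)}$ candidate edge covers together with the balance check, recurse on the two sides and glue the resulting decompositions at the separator. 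The main technical obstacle across the four items is (d), since one must produce an $O(k)$-approximate decomposition \emph{without} being handed any decomposition to start from; the balanced-separator trick together with the fact that the edge-cover enumeration only costs $|E|^{O(k)}$ is what makes the construction both correct and efficient.
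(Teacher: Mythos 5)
The paper offers no proof of this lemma; it is a bundle of citations to the literature (CJG08, Gyssens--Jeavons--Cohen, Bodlaender, Adler--Gottlob--Grohe, Gottlob--Leone--Scarcello), so there is nothing in the paper for your argument to be compared against directly. Your dynamic programming proof of part~(a) is essentially the standard bottom-up Yannakakis-style argument, and it is correct: the size bound $|R_t| \le \|\calA\|^{|\lambda_t|}$ is exactly right because $\chi_t \subseteq \bigcup\lambda_t$ forces every candidate tuple to be assembled from (projections of) tuples in the at most $k$ guard relations; the semijoin with children and the connectedness condition then give soundness and completeness in the usual way; and the remark about running the same argument on the primal graph for treewidth is correct. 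Your sketches for (b)--(d) also identify the correct guiding ideas of the cited constructions (iterated hinge refinement, iterative compression via small separators, and balanced-separator search over $|E|^{O(k)}$ edge covers, respectively).

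One point you should make explicit in (a): for generalized hypertree width (unlike hingetree width and treewidth, covered by (b) and (c)), computing an optimal-width decomposition is NP-hard, so "instances of width $k$ can be solved in time $\|\Phi\|^{p(k)}$" is only true either under the promise that a decomposition is supplied, or by first invoking part~(d) to produce a decomposition of width $O(k)$ and then running your dynamic program on that. Your proof begins with "fix a decomposition," which quietly assumes one is available; stating the dependency on (d) closes that gap.
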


\begin{definition}
Let $\calH= (V,E)$ be a hypergraph and $V' \subseteq V$.
The \emph{induced subhypergraph}  $\calH[V']$ of $\calH$ is the hypergraph $\calH[V'] = (V', \{e \cap V' \mid e \in E, e \cap V' \ne \emptyset\})$.

Let $x,y \in V$, a \textit{path} between $x$ and $y$ is a sequence of vertices $x=v_1,...,v_k=y$ such that for each $i\in [k-1]$ there is an edge $e_i \in E$ with $v_i,v_{i+1} \in e_i$. 

A (connected) component of $\calH$ is the induced subhypergraph $\calH[V']$ for a maximal vertex set $V'$ such that for each pair $x,y\in V'$ there is a path between $x$ and $y$ in $\calH$.
\end{definition}

\begin{observation}\label{obs:subhypergraphwidth}
 Let $\beta$ be any decomposition technique defined in this section. Let $\calH=(V,E)$ be a hypergraph of $\beta$-width $k$. Then for every $V'\subseteq V$ the induced subhypergraph $\calH[V']$ has $\beta$-width at most $k$.
\end{observation}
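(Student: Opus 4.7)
The plan is to take an arbitrary $\beta$-decomposition $(\calT, (\lambda_t)_{t\in T}, (\chi_t)_{t\in T})$ of $\calH$ of width $k$ and restrict it to $V'$ in the obvious way, then argue that nothing essential breaks. Concretely, I would set $\chi'_t := \chi_t \cap V'$ and $\lambda'_t := \{e \cap V' : e \in \lambda_t,\ e \cap V' \ne \emptyset\}$. Since the edges of $\calH[V']$ are exactly the nonempty sets $e \cap V'$ for $e \in E$, this produces a candidate decomposition of $\calH[V']$, and the width does not go up because $|\lambda'_t| \le |\lambda_t| \le k$ (and $|\chi'_t| \le |\chi_t|$, which is what is needed for the treewidth case).

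Next I would verify the three defining conditions of a generalized hypertree decomposition. The connectedness condition for any $v \in V'$ is inherited directly, since $\{t : v \in \chi'_t\} = \{t : v \in \chi_t\}$. For edge coverage, any $e' \in E(\calH[V'])$ is of the form $e \cap V'$ for some $e \in E$; taking $t$ with $e \subseteq \chi_t$ gives $e' \subseteq \chi'_t$. For guard coverage of bags, $\chi'_t = \chi_t \cap V' \subseteq (\bigcup \lambda_t) \cap V' = \bigcup \lambda'_t$. The tree decomposition case is then immediate either by the same argument on the primal graph using $(\calH[V'])_P = (\calH_P)[V']$, or by observing that the bag size bound $|\chi'_t| \le |\chi_t|$ already does the job.

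The only nontrivial case is hingetree width, where conditions 4--6 must also hold, and this is where I expect the main obstacle. Conditions 5 and 6 follow from the construction (condition~5 because $\bigcup \lambda'_t = (\bigcup \lambda_t) \cap V' = \chi_t \cap V' = \chi'_t$; condition~6 because every edge of $\calH[V']$ comes from some $e \in \lambda_t$ by the original condition~6). For condition~4, the potential trap is that some node $t$ may end up with $\lambda'_t = \emptyset$, in which case no guard edge is available at $t$. I would deal with this by pruning every such node from $\calT$ and contracting it into its parent; by condition~5 of the original decomposition, $\lambda'_t = \emptyset$ forces $\chi'_t = \emptyset$, so no vertex of $V'$ is lost and the connectedness subtrees of all $v \in V'$ survive the contractions.

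It then remains to check condition~4 on pairs of surviving nodes $t_1, t_2$. Pick $e_1 \in \lambda_{t_1}$ and $e_2 \in \lambda_{t_2}$ with $\chi_{t_1} \cap \chi_{t_2} \subseteq e_1 \cap e_2$ from the original decomposition. If $\chi'_{t_1} \cap \chi'_{t_2} \ne \emptyset$, any witness lies in $e_1 \cap V'$ and $e_2 \cap V'$, so these are elements of $\lambda'_{t_1}$ and $\lambda'_{t_2}$ respectively, and they witness condition~4 for the restricted decomposition. If the intersection is empty, any elements of the nonempty sets $\lambda'_{t_1}, \lambda'_{t_2}$ satisfy the condition trivially. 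The only delicate point is thus the pruning step; everything else is bookkeeping.
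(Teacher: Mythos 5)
Your proof takes the same approach as the paper's, which simply restricts every bag and guard to $V'$ and declares the verification ``easy to check,'' but you are substantially more careful in actually verifying each axiom for each decomposition type. In particular you correctly identify and resolve the one genuine subtlety---nodes whose restricted guard becomes empty, which would break condition~4 of a hingetree decomposition---by observing that such nodes necessarily have empty restricted bags (via condition~5, or condition~3 in general) and can therefore be pruned without disturbing connectedness or coverage.
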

\begin{proof}
 Let $(\calT, (\lambda_t)_{t\in T}, (\chi_t)_{t_in T})$ be a $\beta$-decomposition of $\calH$ of width $k$. For each guarded block $(\lambda_t, \chi_t)$ compute a guarded block $(\lambda'_t, \chi'_t)$ with $\chi_t := \chi_t \cap V'$ and $\lambda_t :=\{e\cap V' \mid e\in \lambda\}$. It is easy to check that $(\calT, (\lambda'_t)_{t\in T}, (\chi'_t)_{t_in T})$ is a $\beta$-decomposition of width at most $k$.
\end{proof}

\section{Quantified-star size}

In this section we generalize quantified star size which was introduced in \cite{ouroldpaper} for acyclic conjunctive queries to general conjunctive queries.

\begin{definition}
Let $\calH=(V,E)$ be a hypergraph and $S\subseteq V$. Let $C$ be the vertex set of a connected component of $\calH[V-S]$.
Let $E_{C}$ be the set of hyperedges $\{e\in E\mid e \cap C \ne \emptyset\}$ and $V':= \bigcup_{e\in E_C} e$. Then $\calH[V']$ is called an \emph{$S$-component} of $\calH$.
\end{definition} 


\begin{definition}Let $\calH=(V,E)$ be a hypergraph.
For a set $S\subseteq V$ the \emph{$S$-star size} of $\calH$ is the maximum size of an independent set consisting only of vertices in $S$ in an $S$-component of $\calH$. We say that this independent set forms the $S$-star.
\end{definition}

\begin{example}
Take $S=\{v_1,...,v_9\}$ in the hypergraph of  Figure~\ref{fig:hypergraph}. It has  three $S$-components with respective  edge lists:
\begin{enumerate}
\item $\{v_1,u_1\}$, $\{v_2,u_1,u_2\}$, $\{v_2,v_4,u_2,u_3\}$, $\{v_7,v_8,u_5,u_6\}$, $\{v_4,v_5,v_6,v_8\}$, $\{v_3,v_4,v_5,u_3,u_4,u_5\}$,      $\{v_8\}$ 
\item  $\{v_8,v_9,u_8\}$,  $\{v_8\}$ 
\item  $\{v_6,v_9,u_7\}$,  $\{v_6\}$ 
\end{enumerate}
The $S$-star size i.e.\ the size of a maximum independent of $S$-vertices in a $S$-component is $4$. The set $\{v_1,v_2,v_3,v_7\}$  forms an $S$-star (there are several other possibilities).
\end{example}

It is easy to see that for acyclic hypergraphs this definition of $S$-star size coincides with the definition in \cite{ouroldpaper} which was only defined for acyclic hypergraphs.

\begin{definition}
An $S$-hypergraph is a pair $(\calH, S)$ where $\calH=(V,E)$ is a hypergraph and $S \subseteq V$. To each formula $\phi$ we associate an $S$-hypergraph $(\calH, S)$ where $\calH$ is the hypergraph associated to $\phi$ and $S:= \free{\phi}$. The \emph{quantified star size} of a $\CQ$ instance $\Phi=(\calA, \phi)$ is the $S$-star size of $(\calH, S)$.
\end{definition}

Let $\calG_{star}$ be the class of $S$-hypergraphs $(\calH_n, S_n)$, $n\in\mathbb{N}$, where $\calH_n$ is a star graph and $S_n$ consists of its leaves. More precisely, $H_n=(V_n,E_n), S_n$ are defined as
\begin{itemize}
\item $V_n=\{z,y_1,....,y_n\}$, 
\item $E_n=\{\{z,y_i\} \mid i=1,...,n\},$ 
\item $S_n=\{y_1,....,y_n\}.$ 
\end{itemize}

We will use the following lemma from \cite{ouroldpaper} to which we give an alternative simpler proof below.

\begin{lemma}[\cite{ouroldpaper}]\label{lem:hardforstarsB}
 $\sCQ$ is $\sW{1}$-hard restricted to instances that have $S$-hypergraphs in $\calG_{star}$ parameterized by the size of the stars.
\end{lemma}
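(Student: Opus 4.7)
The plan is to establish $\sW{1}$-hardness by a parameterized reduction from the canonical $\sW{1}$-complete problem $p$-$\mathrm{\#Clique}$. Given $(G,k)$ with $G=(V,E)$ and a fixed linear order $<$ on $V$, the target $\sCQ$ instance will use the $k$-leaf star query
\[\phi(y_1,\ldots,y_k) \; := \; \exists z\, \bigwedge_{i=1}^k P_i(z,y_i)\]
with fresh binary predicates $P_1,\ldots,P_k$. The atoms contribute exactly the edges $\{z,y_i\}$ and the free variables are $\{y_1,\ldots,y_k\}$, so the associated $S$-hypergraph is literally $(\calH_k,S_k)\in\calG_{star}$ with star size $k$; the parameter is therefore preserved.

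For the structure $\calA$, the idea is to let the quantified variable $z$ range over encodings of ascending $k$-cliques of $G$, and to use each $P_i$ to force $y_i$ to equal the $i$-th vertex of the clique. Concretely, for each ascending $k$-clique $v_1 < \cdots < v_k$ of $G$, I would add a fresh domain element $\langle v_1,\ldots,v_k\rangle$ and place $(\langle v_1,\ldots,v_k\rangle, v_i) \in P_i^{\calA}$ for every $i\in[k]$. A satisfying assignment of $\phi$ is then forced to pick a witness $z$ encoding such an ascending clique and to read off $y_i = v_i$, and conversely every ascending $k$-clique yields exactly one solution. Hence $|\phi(\calA)| = \#\mathrm{Clique}(G,k)$ on the nose, the ordering convention avoiding the $k!$-overcount that arbitrary enumerations would produce.

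The step that I expect to require the most care is the parameterized accounting of the reduction: in the worst case $G$ has up to $\binom{|V|}{k}$ ascending $k$-cliques, so the structure above has size $|V|^{\Theta(k)}$, which exceeds the $f(k)\cdot \|G\|^{O(1)}$ envelope of a textbook parameterized many-one reduction. The simpler proof offered here presumably handles this via the flexibility of parameterized $T$-reductions: the clique enumeration is absorbed into a single oracle query whose only bounded quantity for the purposes of $\sW{1}$-hardness is the star size of the query, which is exactly $k$. Modulo this bookkeeping step, the reduction is conceptually clean --- star queries naturally count $k$-tuples sharing a common witness $z$, and each $k$-clique of $G$ supplies exactly one such witness --- and the claimed $\sW{1}$-hardness follows immediately from the bijection between satisfying assignments of $\phi$ and ascending $k$-cliques of $G$.
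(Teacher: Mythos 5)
There is a genuine gap, and you have put your finger on it yourself, but the attempt to wave it away as ``bookkeeping'' does not work. In your construction the domain of $\calA$ contains one fresh element per ascending $k$-clique of $G$, so just writing down $\calA$ already takes time $|V|^{\Theta(k)}$. A parameterized $T$-reduction must itself run in time $f(k)\cdot\|x\|^{O(1)}$; the flexibility of $T$-reductions over many-one reductions is only that you may make several oracle calls and post-process, not that a single oracle query may be super-polynomially large. Worse, to build your structure you must already decide, for every ascending $k$-tuple, whether it is a clique --- that is, your ``reduction'' solves the source problem as a subroutine, so the argument is circular even setting aside running time.

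The paper's proof sidesteps exactly this trap by counting the \emph{complement}: tuples $(v_1,\ldots,v_k)\in V^k$ that are \emph{not} cliques. Such a tuple has some pair $(i,j)$ with a missing edge, and the quantified variable $z$ only needs to guess that pair together with the two offending vertices, i.e.\ an element of $V\times V\times[k]\times[k]$. This keeps the domain and all relations of size polynomial in $|V|$ and $k$, so the structure is constructible in time $(k|V|)^{O(1)}$. One then recovers $\#\mathrm{Clique}(G,k)$ from $|\phi(\calA)|$ by the formula $\frac{1}{k!}\bigl(|V|^k - |\phi(\calA)|\bigr)$, which is exactly where the Turing (rather than parsimonious) nature of the reduction is used. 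If you want to repair your proof, this is the move you are missing: do not try to encode cliques into the witness domain directly; encode a certificate of \emph{non}-cliqueness, which is small.
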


\begin{proof} We show the hardness by a parameterized $T$-reduction from $p$-$\mathrm{\#Clique}$. The basic idea is that instead of counting $k$-cliques in a graph, we can also count the $k$-tuples of vertices that are not a clique. 

So let $G=(V,E)$ be a simple, undirected graph and $k\in \mathbb{N}$. A tuple $(v_1, \ldots, v_k)\in V^k$ is not a clique if and only if it there are $i,j\in [k], i\ne j$ such that $v_i v_j$ is not an edge. Observe that because $G$ is loopless this is necessarily true if $(v_1, \ldots, v_k)$ contains a double vertex. We will show how to check if a tuple $(v_1, \ldots v_k)$ is a clique with a $\CQ$-instance of the prescribed form. 

We construct a $\sCQ$-instance $\Phi= (\calA, \phi)$ with $\phi := \exists z \bigwedge_{i\in [k]} P_i(z, v_i)$. Clearly the formula is of the right form. The domain of $\calA$ is $D:= V\cup (V\times V\times [k]\times [k])$. For each $i\in [k]$ the structure $\calA$ has the relation
 \begin{align*}P_i^\calA :=& \cup \{ ((v,w,i,j),v), ((w,v,j,i),v) \mid v,w\in V \\
 &\hskip 2cm  v\ne w, vw\notin E, j\in [k], j\ne i\}\\
& \cup (V\times V\times ([k]\setminus \{i\}) \times ([k]\setminus \{i\}))\times V.\end{align*}
This completes the construction of $\Phi$.

First, observe that $\Phi$ can be constructed in time polynomial in $|G|$ and $k$, so if we can compute the number of $k$-cliques of $G$ from $|\phi(\calA)|$ sufficiently quickly, the construction is indeed a parameterized $T$-reduction.

Furthermore, observe that for each satisfying assignment the variables $v_1, \ldots, v_k$ take only values in $V$. We claim that an assignment $a: \{v_1, \ldots, v_k\}\rightarrow D$ satisfies $\phi$ if and only if $a(v_1), \ldots , a(v_k)$ is not a clique of size $k$ in $G$. Essentially, the quantified variable $z$ here guesses the edge that is missing between $v_i$ and $v_j$.

Indeed, if $a(v_1), \ldots , a(v_k)$ is a tuple of vertices such that two vertices in it are not adjacent, say $a(v_i)= x_i$, $a(v_j)= x_j$, $x_ix_j\notin E$, then assigning $(x_i,x_j,i,j)$ to $z$ satisfies all atoms.

Let on the other hand $a(v_1), \ldots , a(v_k)$ be a clique of size $k$ in $G$. We claim that there is no assignment to $z$ that satisfies all atoms. Clearly in a satisfying assignment $z$ can take no value in $V$. So $z$ must take a value in $V\times V\times [k]\times [k]$, say $(v,w,i,j)$. But then in particular $P_i(z,v_i)$ and $P_j(z,v_j)$ are satisfied. It follows that $a(v_i) = v$, $a(v_j)= w$, $v,w\notin E$, which is a contradictiton. So indeed,  $a(v_1), \ldots , a(v_k)$ is a clique of size $k$ in $G$ if and only if $a$ is a satisfying assignment.

It follows that the number of cliques in $G$ is $\frac{1}{k!} (|V|^k \setminus |\phi(\calA)|)$. But $|V|^k$ and $k!$ can be easily computed in time $(k|V|)^{O(1)}$ and thus one can compute the number of $k$-cliques of $G$ from $|\phi|$, $G$ and $k$ in time $(k|V||\phi|)^{O(1)}$ which completes the reduction.
\end{proof}

\section{The complexity of counting}\label{sct:countGHW}

In this section we show that the decomposition techniques introduced in Section \ref{sct:hypergraphs} lead to efficient counting when combined with bounded quantified star size. Furthermore, we show that bounded quantified star size is necessary for efficient counting under standard assumptions.

\begin{theorem}\label{thm:guarded}
 There is an algorithm that given a $\sCQ$-instance $\Phi=(\calA,\phi)$ of quantified starsize $\ell$ and a generalized hypertree decomposition $\varXi = (\calT,$ $ (\lambda_t)_{t\in T}, (\chi_t)_{t\in T})$ of $\Phi$ of width $k$ counts the solutions of $\phi$ in time $\|\Phi\|^{p(k,\ell)}$ for a fixed polynomial $p$.
\end{theorem}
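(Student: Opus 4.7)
The plan is to reduce $\sCQ$ on $\Phi$ to quantifier-free counting by ``eliminating'' each $S$-component of the hypergraph --- this is the step where the quantified star-size hypothesis is used --- and then to run standard bag-by-bag dynamic programming on $\varXi$. Let $\calH=(V,E)$ be the hypergraph of $\phi$ and $S=\free{\phi}$. The $S$-components of $\calH$ partition the existential variables into ``maximal existential islands,'' each surrounded by a set of free boundary vertices. For every $S$-component $\calH_C$ I would materialise a relation $T_C$ on its free-boundary vertices, consisting of exactly those tuples that admit some satisfying extension to the existential interior of $\calH_C$ (using the relevant atoms of $\phi$ interpreted over $\calA$).

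The key technical lemma I would need is that $|T_C|\le\|\Phi\|^{O(k+\ell)}$ and that $T_C$ is computable in time $\|\Phi\|^{O(k+\ell)}$. The width bound gives, via Observation~\ref{obs:subhypergraphwidth}, a generalized hypertree decomposition of $\calH_C$ of width at most $k$; the star-size bound says $\calH_C$ has no independent set of free vertices of size exceeding $\ell$, so the free vertices of $\calH_C$ can be ``covered'' by at most $\ell$ edges of the $S$-component. Running a bottom-up pass on the sub-decomposition of $\calH_C$ while materialising per-bag joins of guards (each of size at most $\|\Phi\|^k$) and projecting at each step onto the boundary, together with this covering argument, yields the claimed bound. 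This step is the main obstacle: one has to show carefully that the intermediate projections do not blow up beyond $\|\Phi\|^{O(k+\ell)}$ even though the interior is existentially quantified, and the star-size bound is exactly what prevents blow-up, since otherwise one could realise arbitrarily many pairwise-independent free tuples inside a single $S$-component.

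Once the relations $T_C$ are in hand, replacing each $S$-component in $\phi$ by the single atom $T_C(\bar v_C)$ on its free boundary $\bar v_C$ yields a solution-equivalent, quantifier-free query $\phi'$ on $\calA$. Its hypergraph is essentially $\calH[S]$ augmented by the hyperedges $\bar v_C$, and from $\varXi$ one reads off a generalized hypertree decomposition of $\phi'$ of width $O(k+\ell)$: each new hyperedge $\bar v_C$ is covered by the union of bags of $\varXi$ that the original $S$-component touched, all of which lie along a bounded-size subtree by the connectedness condition. Counting the solutions of a quantifier-free $\sCQ$-instance of bounded generalized hypertree width is standard and tractable (the usual Yannakakis-style DP along the decomposition, using that each of the at most $\|\Phi\|^{O(k+\ell)}$-size materialised atoms can be joined and projected bag-by-bag without overcounting in the absence of quantifiers), so $|\phi(\calA)|=|\phi'(\calA)|$ can be computed in time $\|\Phi\|^{p(k,\ell)}$ for a fixed polynomial $p$, as required.
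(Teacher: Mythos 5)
Your high-level plan is the same as the paper's: split off the $S$-components, use the width-$k$ decomposition to materialise per-bag joins, use star size $\ell$ to bound the boundary relation $T_C$ on the free vertices $S_i$ of each component, replace each component by the atom $T_C$, and count on the resulting quantifier-free instance of bounded width. However, there is a genuine gap in the step where you argue that the quantifier-free instance $\phi'$ has bounded generalized hypertree width.

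You write that ``each new hyperedge $\bar v_C$ is covered by the union of bags of $\varXi$ that the original $S$-component touched, all of which lie along a bounded-size subtree by the connectedness condition.'' Neither half of this is right. First, a generalized hypertree decomposition requires every hyperedge to lie inside a \emph{single} bag, not a union of bags; covering $\bar v_C=S_i$ by a union of old bags does not produce a valid decomposition. Second, the subtree of bags touching an $S$-component is connected but \emph{not} of bounded size --- it can be as large as $\calT$ itself --- so this does not yield any width bound at all, let alone $O(k+\ell)$. What the paper does instead is to modify each bag locally: for every guarded block $(\lambda,\chi)$ with $\chi\cap V_i\ne\emptyset$, replace $\chi\cap V_i$ by $S_i$ (so $\chi'=(\chi\setminus V_i)\cup S_i$) and replace the edges of $\lambda$ that meet $V_i'$ by the single new edge $S_i$. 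Since each such bag had at least one guard edge inside $V_i'$, the guard size does not increase, so the width stays at most $k$; connectedness is preserved because any $v\in S_i$ lies together with some $u\in V_i$ in a common edge, which forces the two relevant subtrees to intersect. Your version needs this local-replacement argument rather than a ``union of bags'' argument. A secondary quibble: your bound $|T_C|\le\|\Phi\|^{O(k+\ell)}$ should be $\|\Phi\|^{O(k\ell)}$ (an $\ell$-fold join of bag-relations each of size $\|\Phi\|^{O(k)}$); this is still a polynomial exponent so the theorem is unaffected, but the $O(k+\ell)$ claim is not justified. Finally, your description of a ``bottom-up pass while projecting onto the boundary'' leaves unclear how intermediate projections are kept small; the cleaner route (which the paper takes) is to first acyclify the component via the decomposition, then find an edge cover of $S_i$ by at most $\ell$ of the materialised bag-atoms using Lemma~\ref{lem:edgecover}, and then, for each choice of one tuple from each covering atom, decide whether the instantiated acyclic instance is satisfiable.
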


In the proof we will use the following lemma from \cite{ouroldpaper}.

\begin{lemma}\label{lem:edgecover}
 For acyclic hypergraphs the size of a maximum independent set and a minimum edge cover coincide. Moreover, there is a polynomial time algorithm that given an acyclic hypergraph $\calH$ computes a maximum independent set~$I$ and a minimum edge cover $E^*$ of $\calH$.
\end{lemma}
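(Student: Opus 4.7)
The plan is to establish the min–max equality $|I^{\max}|=|E^{*\min}|$ and the polynomial-time algorithm simultaneously via a greedy construction on a join tree of $\calH$. The easy direction $|I|\le|E^*|$, valid for any independent set $I$ and any edge cover $E^*$, is immediate: by independence, no single edge of $E^*$ can cover two vertices of $I$, so covering $I$ alone already requires $|I|$ distinct cover edges.

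For the matching lower bound I would first compute a join tree $T$ of $\calH$ in polynomial time (classical, e.g.\ via the GYO reduction), root it arbitrarily, and process its nodes in post-order while maintaining a growing independent set $I$, a growing cover $E^*$, and a set $\mathrm{Cov}\subseteq V$ of already-covered vertices. At a leaf $e$ with parent $p$, call $P(e):=e\setminus p$ the \emph{private vertices} of $e$ (taking $P(e):=e$ if $e$ is the root). If $P(e)\setminus \mathrm{Cov}$ is nonempty, select any $v$ in it, add $v$ to $I$ and $e$ to $E^*$, update $\mathrm{Cov}:=\mathrm{Cov}\cup e$, and delete $e$ from $T$; otherwise simply delete $e$. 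By construction, $|I|=|E^*|$, and the whole procedure runs in polynomial time.

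The correctness proof hinges on the connectedness condition of the join tree, and the main obstacle will be independence of $I$. Coverage is easy: for $v\in V$ let $h_v$ denote the highest node of $T$ containing $v$; by connectedness $v\notin p(h_v)$, so $v\in P(h_v)$, and when $h_v$ is processed either $v$ is already covered or it is freshly covered by adding $h_v$ to $E^*$. For independence, suppose two distinct $u,v\in I$, added at steps $e_u$ and $e_v$, both lie in some edge $f$. Then $e_u$ and $e_v$ are by construction the highest nodes of $T$ containing $u$ and $v$ respectively, and connectedness forces $f$ into the subtree rooted at $e_u$ as well as the subtree rooted at $e_v$, so $e_u$ and $e_v$ lie on a common root-to-leaf path. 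The case $e_u=e_v$ is impossible since only one vertex is selected per step. If $e_u$ strictly dominates $e_v$, then applying connectedness to $u$ along the path $e_u\to e_v\to f$ forces $u\in e_v$, so processing $e_v$ (which happens before $e_u$ in post-order) places $u$ into $\mathrm{Cov}$; this contradicts the fact that $u$ was later picked as an uncovered private vertex at step $e_u$. Hence $I$ is independent, and combining $|I|=|E^*|$ with weak duality yields both the claimed equality and the optimality of the constructed $I$ and $E^*$.
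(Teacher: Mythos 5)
The paper does not actually prove this lemma; it cites it directly from \cite{ouroldpaper}, so there is no in-paper argument against which to compare yours. Taken on its own, your proof is correct. Weak duality $|I|\le|E^*|$ is sound: mapping each $v\in I$ to a covering edge in $E^*$ must be injective since $I$ is independent. The greedy post-order scan of a rooted join tree then produces a matching pair with $|I|=|E^*|$, and both parts of your correctness argument are right. For coverage, every $v$ lies in $P(h_v)$ for $h_v$ its highest occurrence, so $v$ is either already in $\mathrm{Cov}$ when $h_v$ is processed or gets covered by adding $h_v$ to $E^*$. For independence, if two selected vertices $u,v$ shared a hyperedge $f$, the join-tree node of $f$ would be a descendant of both selection nodes $e_u,e_v$, forcing one (say $e_u$) to be a proper ancestor of the other; connectedness then puts $u$ into $e_v$, and since $e_v$ is processed first in post-order, $u$ enters $\mathrm{Cov}$ before $e_u$ is reached, contradicting the choice of $u$ as an uncovered private vertex. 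Together with weak duality this pins down $\alpha=\rho=|I|=|E^*|$ and gives the polynomial-time algorithm. This is the standard join-tree route to such König-type equalities for $\alpha$-acyclic hypergraphs and very plausibly mirrors the proof in \cite{ouroldpaper}.

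One small point worth tightening if you write this up formally: you implicitly take the join tree to be in bijection with the set of hyperedges (so that the edge $f$ is itself a node). That is the classical definition and is what the GYO construction yields, so it is harmless, but the paper's definition via width-$1$ generalized hypertree decompositions allows bags that are proper subsets of edges; the argument still works with ``the node $t$ with $f\subseteq\chi_t$'' in place of ``the node $f$,'' and stating it that way avoids any mismatch with the paper's conventions.
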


\begin{proof}[of Theorem \ref{thm:guarded}]
 Given $\Phi=(\calA,\phi)$, we construct a solution equivalent instance $\Phi''$ in two steps which is of generalized hypertree width~$k$, too, and has a quantifier free formula.
 
 Let $\calH= (V,E)$ be the hypergraph of $\phi$.  Let $V_1, \ldots, V_m$ be the vertex sets of the components of $\calH[V-S]$ and let $V_1', \ldots, V_m'$ be the vertex sets of the $S$-components of~$\calH$. Clearly, $V_i\subseteq V_i'$ and $V_i'-V_i = V_i'\cap S =: S_i$. Let $\Phi_i$ be the $\sCQ$-instance whose formula $\phi_i$ is obtained by restricting all atoms of $\phi$ to the variables in $V_i'$ and whose structure $\calA_i$ is obtained by projecting all relations of $\calA$ accordingly.  The associated hypergraph of $\phi_i$ is $\calH[V_i']$ and $\calH[V_i']$ has a generalized hypertree decomposition~$\varXi_i$ of width at most $k$ with tree a $\calT_i$ that is a subtree of $\calT$ (see Observation~\ref{obs:subhypergraphwidth}). 

For each $\Phi_i$ we construct  a new $\sCQ$-instance $\Phi_i'=(\calA_i',\phi_i')$ as follows. For each guarded block $b = (\lambda, \chi)\in \varXi_i$ we construct a new atomic formula $\varphi_b$ in the variables $\chi$. The associated relation is given by $\pi_{\chi}(\bowtie_{\phi \in \atom{\Phi_i}\colon \var{\phi} \subseteq \bigcup \lambda} \phi)$ i.e.\ by taking the natural join of all relations whose variables are guarded in the guarded block and projecting on~$\chi$. The formula $\phi_i'$ for  $\Phi_i'$ is obtained as the conjunction of all~$\varphi_b$. The decomposition $\varXi_i$ has width at most $k$ so this can be done in time $\|\Phi\|^{O(k)}$. Obviously, $\Phi_i$ and $\Phi'_i$ are solution equivalent. Furthermore $\phi'_i$ is acyclic, because it has a decomposition with tree $\calT_i$, the same blocks as $\varXi_i$ and width $1$. 
 Let $\calH_i$ be the associated hypergraph of $\phi_i'$, then $\calH_i$ has only one single $S_i$-component, because all the vertices in $V_i$ are connected in~$\calH$ and thus also in $\calH_i$. Also the $S_i$-star size of $\calH_i$ is at most $\ell$. To see this consider two independent vertices $u,v$ in~$\calH_i$. The edges of $\calH_i$ are equal to the blocks of $\varXi_i$, so $u$ and~$v$ do not appear in a common block in $\varXi_i$. But then $u$ and $v$ cannot appear in one common block in $\varXi$, because of~$\calT$ being a tree and the connectedness condition. So $u$ and $v$ are independent in $\calH$, too, and thus every independent set in $\calH_i$ is also independent in $\calH$. So $\calH_i$ indeed has $S_i$-star size at most $\ell$. Thus the vertices in $S_i$ can be covered by at most $\ell$ edges $e_1, \ldots , e_\ell$ in $\calH_i$ which we can compute in polynomial time by Lemma \ref{lem:edgecover}. Let $\alpha_1, \ldots , \alpha_\ell$ be the corresponding atoms. We again construct a new atomic formula $\phi_i''$ in the variables $S_i$ only and an associated relation $\calA_i''$ as follows:  For each combination $t_1, \ldots, t_\ell$ of tuples in $\alpha_1(\calA_i'), \ldots, \alpha_{\ell}(\calA_i')$ fix the free variables in $\phi_i'$ to the constants prescribed by the tuples $t_1, \ldots, t_\ell$ if these do not contradict. If the resulting $\CQ$ instance has a solution, add the projection of $t_1\bowtie \ldots \bowtie t_\ell$ on $S_i$  to the relation $\calA_i''$ of $\phi_i''$. By construction $\Phi_i'$ and $(\calA_i'',\varphi_i'')$ are solution equivalent.
 Observe that the instances to be solved in this construction are tractable \cite{Yannakakis81}, so all of this can be done in time $\|\Phi_i\|^{p(k,\ell)}$ for a polynomial $p'$.

We now eliminate all quantified variables in the original formula $\phi$. To do so we add the atom $\phi_i''$ for $i\in [m]$ and delete all atoms that contain any quantified variable, i.e.\ we delete each~$\phi_i'$. Add the $\calA_i''$ to the structure $\calA$ and call the resulting $\sCQ$ instance $\Phi''=(\calA'',\phi'')$. Because $(\calA_i'',\phi_i'')$ is solution equivalent to $\Phi_i'$, we have that $\Phi$ and $\Phi''$ are solution equivalent, too. We construct a guarded decomposition of $\phi''$ by doing the following: For each guarded block $(\lambda, \chi)$ of $\varXi$ with $\chi\cap V_i\ne \emptyset$ we construct a guarded block $(\lambda', \chi')$ by deleting all edges~$e$ with $e\cap V_i\ne \emptyset$ from $\lambda$ and adding  the edge $S_i$ for $\phi_i''$. Furthermore we set $\chi' = (\chi-V_i)\cup S_i$. It is easy to see that the result is indeed a generalized hypertree decomposition of~$\phi''$ of width at most $k$.

With standard techniques (see e.g.\ \cite{CJG08}) we construct in polynomial time a quantifier free acyclic $\sCQ$-instance that is solution equivalent to $\Phi''$. Its solutions and thus those of $\Phi$ can then be counted with the algorithm in \cite{PS-11} or \cite{ouroldpaper}.
\end{proof}

We now show that bounded quantified star size is necessary for efficient counting no matter which other structural restrictions we put on $S$-hypergraphs.

\begin{lemma}\label{lem:tightgeneral}
 Let $\calG$ be a recursively enumerable class of $S$-hypergraphs such that $\sCQ$ for all instances whose $S$-hypergraph is in $\calG$ is fixed parameter tractable parameterized by the size of the formulas. Then $\calG$ has bounded $S$-star size or $\sW{1} = \FPT$.
\end{lemma}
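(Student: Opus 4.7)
We prove the lemma by contrapositive. Assume $\calG$ has unbounded $S$-star size. We give a parameterized $T$-reduction from $\sCQ$ on $\calG_{star}$-instances (which is $\sW{1}$-hard parameterized by the star size by Lemma~\ref{lem:hardforstarsB}) to $\sCQ$ on instances whose $S$-hypergraph lies in $\calG$, parameterized by the size of the formula. Together with the assumed fixed parameter tractability of the latter, this will force $\sW{1} = \FPT$.

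Given a star instance $\Phi = (\calA, \phi)$ with $\phi = \exists z \bigwedge_{i=1}^n P_i(z, y_i)$ over a finite domain $D$ and parameter $n$, we exploit the recursive enumerability of $\calG$ together with the unboundedness of $S$-star size to compute in time $h(n)$, for some computable $h$, a pair $(\calH, S) \in \calG$ with $\calH = (V, E)$ of $S$-star size at least $n$, along with a witnessing independent set $\{y_1, \ldots, y_n\} \subseteq S$ contained in some $S$-component $\calH[V']$ whose non-free part $C := V' \setminus S$ is a connected component of $\calH[V \setminus S]$. We then construct an instance $\Phi' = (\calA', \phi')$ whose $S$-hypergraph is $(\calH, S)$, with $\calA'$ over the augmented domain $D' := D \cup \{\ast\}$ where $\ast$ is a fresh element. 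The design principle is that in every satisfying valuation all vertices of $C$ jointly carry a common value $d \in D$ (simulating the existential witness $z$), the $y_i$'s carry values $a_i \in D$ (corresponding to the star's free variables), and every other vertex of $V$ is forced to $\ast$.

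Concretely, for each edge $e \in E_C$ we let $R_e$ consist of the tuples that assign some common $d \in D$ to all vertices of $e \cap C$, assign $\ast$ to each vertex of $(e \cap S) \setminus \{y_1, \ldots, y_n\}$, and, whenever $e$ contains some $y_i$ (necessarily unique by independence), additionally require that the value $a_i$ assigned to $y_i$ satisfies $(d, a_i) \in P_i^\calA$. For every other edge $e \in E$ we let $R_e$ contain the tuples that set each vertex of $e \setminus \{y_1, \ldots, y_n\}$ to $\ast$ while letting any $y_i \in e$ range freely over $D$. Using the connectedness of $C$ in $\calH[V \setminus S]$, any satisfying valuation of $\phi'$ on $\calA'$ must assign a single common $d \in D$ to every vertex of $C$; the $E_C$-constraints at the $y_i$'s then amount exactly to $\bigwedge_i P_i(d, a_i)$, all remaining free variables are forced to $\ast$, and conversely every tuple $(a_1, \ldots, a_n) \in \phi(\calA)$ extends to a satisfying valuation by choosing any suitable witness $d$. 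Thus $|\phi'(\calA')| = |\phi(\calA)|$ up to a multiplicative factor depending only on $\calH$ (to account for $S$-vertices that are isolated in $\calH$ or belong to no $S$-component at all), and this factor is computable from $n$.

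Since $\calH$ depends only on $n$, the formula $\phi'$ has size bounded by a computable function of $n$, while $\calA'$ can be built in time polynomial in $\|\calA\|$ once $\calH$ is available; composing the obtained reduction with the one from Lemma~\ref{lem:hardforstarsB} yields a genuine parameterized $T$-reduction from $p$-$\#\mathrm{Clique}$, establishing $\sW{1}$-hardness. The main technical obstacle is to design the relations carefully enough to cope with the ``impurities'' of an arbitrary $(\calH, S) \in \calG$: one must make sure that the connectivity of $C$ really propagates the common value $d$ across all quantified vertices of the target component, that additional free variables lying in $V' \cap S$ beyond the $y_i$'s (or in other $S$-components, or in no $S$-component at all) cannot silently contribute spurious solutions, and that vertices shared between edges inside and outside $E_C$ remain mutually consistent without being overconstrained.
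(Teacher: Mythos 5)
Your proof is correct and follows essentially the same approach as the paper's: both argue by contrapositive, reducing $\sCQ$ on $\calG_{star}$-instances (hard by Lemma~\ref{lem:hardforstarsB}) to $\sCQ$ on instances whose $S$-hypergraph lies in $\calG$, by computing a member of $\calG$ with $S$-star size $\ge n$ and embedding the star into it. The only cosmetic difference is in the embedding: you propagate the existential witness $d$ across the whole component $C$ of $\calH[V\setminus S]$ and force all other vertices to a fresh element $\ast$, whereas the paper propagates $d$ only along paths $P_i$ from a common vertex $v$ to the independent $y_i$'s and forces off-path vertices to a dummy value; both yield relations of size $O(|D|^2)$ (each edge meets at most one $y_i$ by independence) and a valid parameterized $T$-reduction.
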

\begin{proof}[sketch]
 The proof is a generalization of the respective proof in \cite{ouroldpaper}:
We show that if the $S$-star size of $\calG$ is not bounded, then there is an FPT algorithm for $\sCQ$ on $\calG_{star}$, the class of stars with a single quantified variable in the center. As this problem is $\sW{1}$-hard by Lemma \ref{lem:hardforstarsB}, it follows that $\sW{1}=\FPT$.

So assume that $\sCQ$ is tractable on $\calG$ and $\calG$ has unbounded $S$-star size. We will construct  a fixed parameter algorithm for $\sCQ$ on $\calG_{star}$. So let $\Phi = (\calA,\varphi)$ be an instance of this latter problem, i.e.\ $\Phi$ has the formula $\varphi:= \exists z \bigwedge_{i=1}^k E_i(y_i, z)$. Let the domain of $\calA$ be $D$. Because~$\calG$ is recursively enumerable and of unbounded $S$-star size, there is a computable function $g:\mathbb{N}\rightarrow \mathbb{N}$ such that for $k\in \mathbb{N}$ one can compute $(\calH, S)\in \calG$ with $S$-star size at least $k$ in time $g(k)$. We will embed $\Phi$ into $\calH$ to construct an $\sCQ$-instance $\Phi' = (\calA',\psi)$ of size $g(k)n^{O(1)}$ where $n$ is the size of $\Phi$. Furthermore, $\psi$ will have the $S$-hypergraph~$\calH$ and $\calA'$ the same domain $D$ as $\calA$. For convenience, $\Phi'$ will be built on a language containing one distinct relation symbol for each hyperedges in $\calH$.

Let $\calH'$ be the $S$-component of $\calH$ that contains $k$ independent vertices in the respective primal component. Call these vertices $s_1, \ldots s_k$. We may assume that the $s_i$ are also computed in time $g(k)$ during the construction of $\calH$. Observe that there must be a vertex $v$ that is connected to each of the $s_i$ by a path $P_i$ such that the only vertex in $P_i$ that is in~$S$ is $s_i$, because all the $s_i$ lie in the same $S$-component. We now construct a $\sCQ$ instance $\Phi'$ that has the associated $S$-hypergraph $\calH$.

All vertices that do not lie on any $P_i$ are forced to a dummy value $d$ in a straightforward way by all their constraints. All vertices on the $P_i$ that are no $s_j$ may take arbitrary but equal values in $D$. This is possible, because they are all connected to the common vertex $v$ by paths. Let $v_i$ be the predecessor of $s_i$ on $P_i$. For all constraints that contain $v_i$ and $s_i$ we allow for them exactly the combinations allowed by the relation of $E_i^\calA$. Observe that there is no edge that contains more than one of the $s_i$ by definition, so each constraint has at most~$|D|^2$ tuples.

Clearly, $\Phi$ and $\Phi'$ have the same number of solutions. Furthermore, we have $|\psi| \le g(k)$ and $\Phi'$ can be constructed in time at most $g(k) \|\Phi\|^2$, because $\calH$ has size at most $g(k)$ and the size of the relations for the constraints is bounded by~$|D|^2$. But by assumption the solutions of $\Phi'$ can be counted in time $h(|\psi|)\|\Phi'\|^c$ for a constant $c$ and a computable function $h$. Thus the solutions of $\Phi$ can be counted in time $h(g(k))\|\Phi\|^{c'}$ for a constant $c'$ which completes the proof.
\end{proof}

With Theorem \ref{thm:guarded} and Lemma \ref{lem:tightgeneral} we have a solid understanding of the complexity of $\sCQ$ for structural classes that can be characterized by restrictions of generalized hypertree width. For each decomposition method with what Cohen et al. \cite{CJG08} call the ``tractable construction'' property, i.e.\ there must be a way to construct a decomposition efficiently, quantified star size is essentially the only parameterization that makes counting tractable. For the definitions of decomposition techniques not defined in this paper see~\cite{GottlobLS00}.

\begin{corollary}
Let $\beta$ be one of the following decomposition techniques:
biconnected component, cycle-cutset,  cycle-hypercutset, hinge-tree, hypertree, or  generalized hypertree decomposition.
Let furthermore $\calG$ be a recursively enumerable class of $S$-hypergraphs of bounded $\beta$-width. Then counting solutions to all $\sCQ$-instances whose associated hypergraph is in  $\calG$ is tractable if and only if $\calC$ is of bounded $S$-star size (assuming $\FPT \ne \sW{1}$).
\end{corollary}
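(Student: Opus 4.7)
The plan is to derive the corollary by reducing each case to a direct application of Theorem \ref{thm:guarded} and Lemma \ref{lem:tightgeneral}. Since the statement is an if-and-only-if, I would treat the two directions separately and begin with the easier one.

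For the ``only if'' direction (tractability implies bounded $S$-star size), I would simply invoke Lemma \ref{lem:tightgeneral}. That lemma is stated for arbitrary recursively enumerable classes of $S$-hypergraphs, so the $\beta$-width restriction is not needed: if $\sCQ$ is polynomial-time tractable on instances whose $S$-hypergraph lies in $\calG$, then in particular it is fixed parameter tractable parameterized by formula size, and the lemma yields bounded $S$-star size under the assumption $\FPT \ne \sW{1}$.

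For the ``if'' direction, I would use that each decomposition method $\beta$ in the list has the property that bounded $\beta$-width entails bounded generalized hypertree width, and moreover that a generalized hypertree decomposition of bounded width can be constructed in polynomial time on such inputs. Concretely, I would argue case by case that for hinge-tree decompositions a minimum width decomposition is computable in polynomial time (Lemma, part b), and it is itself already a generalized hypertree decomposition by definition; hypertree decompositions are generalized hypertree decompositions by construction and are computable in polynomial time for fixed width; generalized hypertree decompositions of width $O(k)$ are computable in time $|V|^{O(k)}$ by Lemma, part d; and for biconnected component, cycle-cutset and cycle-hypercutset decompositions one appeals to the ``tractable construction'' property listed in \cite{CJG08}, together with the observation that each of these gives a generalized hypertree decomposition whose width is at most a function of the original $\beta$-width. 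Given such a decomposition of width bounded by some $k'=k'(k)$, and given that $\calG$ has bounded $S$-star size $\ell$, Theorem \ref{thm:guarded} then counts the solutions in time $\|\Phi\|^{p(k',\ell)}$, which is polynomial for fixed $k$ and $\ell$.

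The main obstacle is really only bookkeeping: verifying, for each of the six decomposition techniques in the list, that a bounded-width decomposition can effectively be turned into a bounded-width generalized hypertree decomposition in polynomial time. For hinge-tree, hypertree and generalized hypertree decompositions this is immediate from the preliminaries; for the biconnected component, cycle-cutset and cycle-hypercutset variants I would cite the uniform treatment of these methods in \cite{CJG08}, which establishes precisely the construction and comparability properties we need. Once those are in place, the corollary follows by combining the construction step with Theorem \ref{thm:guarded} (for the ``if'' direction) and Lemma \ref{lem:tightgeneral} (for the ``only if'' direction), with no further computation required.
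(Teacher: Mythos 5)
Your proposal is correct and takes essentially the same route the paper intends: the paper does not spell this corollary out but points to Theorem~\ref{thm:guarded} plus Lemma~\ref{lem:tightgeneral}, together with the remark that all the listed methods have the ``tractable construction'' property of \cite{CJG08}, which is exactly the case-by-case bookkeeping you carry out. Your explicit reduction of each $\beta$ to a polynomial-time-constructible bounded-width generalized hypertree decomposition, then feeding that into Theorem~\ref{thm:guarded} for one direction and invoking Lemma~\ref{lem:tightgeneral} for the other, is precisely the argument the authors leave implicit.
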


\section{An optimal result for bounded arity}

In this section we show that for bounded arity $\sCQ$ we can exactly characterize which classes of instances allow polynomial time counting. This result is derived by combining the results of the preceding sections and the following theorem 
from~\cite{GroheSS01} that we rephrase in our slighlty different wording.  

\begin{theorem}[\cite{GroheSS01}] \label{thm:grohe}Let $\calG$ be a recursively enumerable class of hypergraphs of bounded arity. Assume $\FPT \ne \W{1}$. Then the following three statements are equivalent:
\begin{itemize}
 \item Boolean $\CQ$ for all instances with hypergraphs in $\calG$ can be decided in polynomial time.
 \item Boolean $\CQ$ for all instances with hypergraphs in $\calG$ is fixed parameter tractable parameterized by the size of the formulas.
 \item The hypergraphs in $\calG$ are of bounded treewidth.
\end{itemize}
\end{theorem}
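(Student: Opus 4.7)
The plan is to prove the cycle $(3) \Rightarrow (1) \Rightarrow (2) \Rightarrow (3)$.

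For $(3) \Rightarrow (1)$, I would apply part (a) of the Lemma stated earlier in the preliminaries, with $\beta$ being treewidth: if the hypergraphs in $\calG$ have treewidth at most $k$, then Boolean $\CQ$ is solvable in time $\|\Phi\|^{p(k)}$, which is polynomial for a fixed class. For $(1) \Rightarrow (2)$, the implication is immediate, since any polynomial-time algorithm is trivially fixed-parameter tractable for any parameterization.

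The substance of the theorem is $(2) \Rightarrow (3)$, which I would prove by contraposition: assuming $\calG$ has unbounded treewidth, I would construct a parameterized reduction from $p$-$\mathrm{Clique}$ to Boolean $\CQ$ on instances whose hypergraph lies in $\calG$, parameterized by formula size, thus establishing $\W{1}$-hardness (contradicting $(2)$ under $\FPT\ne \W{1}$). The central tool is the Excluded Grid Theorem of Robertson and Seymour: there is a computable function $w$ such that any graph of treewidth at least $w(r)$ contains the $r\times r$ grid as a minor. Since $\calG$ is recursively enumerable and of unbounded treewidth, for each $r$ I can effectively locate a hypergraph $\calH_r\in \calG$ whose primal graph contains an $r\times r$ grid minor. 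Bounded arity is used here to guarantee that the primal graph faithfully reflects the hypergraph's combinatorial structure and that each hyperedge covers only a constant number of vertices.

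Given an input $(G,k)$ for $p$-$\mathrm{Clique}$, I would pick $r=r(k)$ large enough that $K_k$ is a minor of the $r\times r$ grid, and hence of the primal graph of $\calH_r$, with explicit branch sets $B_1,\ldots,B_k$ and connecting paths $P_{ij}$ for $1\le i<j\le k$. I would then build a structure $\calA_G$ over the signature of $\calH_r$ as follows: atoms whose variables lie inside a single branch set $B_i$ are populated to force, through equality propagation along a spanning tree of $B_i$, that all variables in $B_i$ take a common value $v_i\in V(G)$; atoms along each connecting path $P_{ij}$ are populated so as to propagate a pair $(v_i,v_j)$ along the path while enforcing $v_iv_j\in E(G)$ at its endpoints; and atoms touching vertices outside the $K_k$-minor are made trivially satisfiable using dummy tuples. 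Then the satisfying assignments of the resulting instance project bijectively (up to a $k!$ factor accounting for the ordering of clique vertices) onto the $k$-cliques of $G$. Crucially the query formula depends only on $\calH_r$, whose size is a function of $k$ alone, while $\calA_G$ is built in polynomial time; this yields the required parameterized reduction.

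The principal obstacle is the last construction: designing the equality-propagation and edge-propagation gadgets so that (i) they faithfully enforce the intended semantics with atoms of exactly the shapes dictated by $\calH_r$, (ii) they compose correctly where branch sets meet the connecting paths, and (iii) the auxiliary atoms outside the $K_k$-minor do not introduce spurious satisfying assignments. The bounded arity hypothesis is indispensable at this step, since it keeps each atom ``local'' in the primal graph and prevents a single hyperedge from short-circuiting the gadgets; without it, treewidth of the primal graph is no longer the right measure and one must resort to finer notions such as fractional or submodular hypertree width. Once these gadgets are in place, combining with the trivial directions gives the full equivalence.
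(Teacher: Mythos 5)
The paper does not prove this theorem: it is cited verbatim from Grohe, Schwentick and Segoufin~\cite{GroheSS01} (and the later strengthening~\cite{Grohe07}), so there is no internal proof to compare against. Evaluating your sketch on its own merits, the easy directions $(3)\Rightarrow(1)\Rightarrow(2)$ and the overall plan for $(2)\Rightarrow(3)$ (contraposition, Excluded Grid Theorem, reduction from $p$-$\mathrm{Clique}$ with the query formula depending only on $k$) are the right skeleton and do match the known argument.

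However, the central step contains a genuine error. You write ``pick $r=r(k)$ large enough that $K_k$ is a minor of the $r\times r$ grid,'' and you then build the reduction around branch sets $B_1,\ldots,B_k$ of a $K_k$-minor with connecting paths $P_{ij}$. This cannot work: grids are planar, planarity is preserved under taking minors, and $K_5$ is nonplanar, so no grid of any size has a $K_k$-minor for $k\ge 5$. Consequently the branch sets and pairwise connecting paths you rely on simply do not exist inside the grid minor guaranteed by the Excluded Grid Theorem, and the equality-propagation and edge-propagation gadgets have nothing to attach to. The actual reduction in~\cite{GroheSS01,Grohe07} does not pass through a $K_k$-minor at all; instead it encodes the $k$-clique (in fact a colored/partitioned variant of it) \emph{directly onto the grid}: roughly, a large enough grid minor is extracted from the primal graph, grid position $(i,j)$ is made to carry a pair of candidate clique vertices, horizontal grid edges enforce agreement on the first component (so rows carry a single vertex), vertical edges enforce agreement on the second component (so columns do too), the diagonal ties row $i$ and column $i$ to the same vertex, and off-diagonal cells check the edge relation of $G$. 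All of this is implemented by populating relations on atoms whose variables lie inside the branch sets of the \emph{grid} minor and along the grid edges between them, using bounded arity to keep each atom local. That encoding is what makes the reduction go through; the $K_k$-minor route is a dead end and needs to be replaced by this grid-based encoding for the proof to be correct.
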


 Theorem \ref{thm:grohe} is originally stated to be true even for every fixed vocabulary. It
 has been generalized to any recursively enumerable class of conjunctive formulas \cite{Grohe07}. In this context,  a characterization of tractability for counting solutions of \textit{quantifier-free} conjunctive queries is given in \cite{DalmauJ04} in almost the same terms as Theorem \ref{thm:grohe} but with the weaker assumption that $\FPT \ne \sW{1}$. 
 We show here a complete characterization of tractability for counting for general conjunctive queries. Not too surprisingly, tractability depends on both treewidth and star size of the underlying hypergraph.
 
\begin{theorem}\label{thm:bounded}
 Let $\calG$ be a recursively enumerable class of $S$-hypergraphs of bounded arity. Assume that $\W{1} \ne \FPT$. Then the following statements are equivalent:
 \begin{enumerate}
  \item \label{bnd:1} $\sCQ$ for all instances whose $S$-hyper\-graph is in $\calG$ is solvable in polynomial time.
  \item \label{bnd:2} $\sCQ$ for all instances whose $S$-hyper\-graph is in $\calG$ is fixed parameter tractable parameterized by the size of the formulas.
  \item \label{bnd:3} There is a constant $c$ such that for each $S$-hypergraph $(\calH,S)$ in $\calG$ 
  the treewidth of $\calH$ and the $S$-star size
 are at most $c$.
\end{enumerate}
 \end{theorem}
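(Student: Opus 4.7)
I will prove the three statements equivalent cyclically via $\ref{bnd:3} \Rightarrow \ref{bnd:1} \Rightarrow \ref{bnd:2} \Rightarrow \ref{bnd:3}$. The step $\ref{bnd:1} \Rightarrow \ref{bnd:2}$ is immediate because any polynomial time algorithm is trivially fixed parameter tractable.

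For $\ref{bnd:3} \Rightarrow \ref{bnd:1}$, suppose each $(\calH, S) \in \calG$ has treewidth and $S$-star size both bounded by $c$. Since the arity is also bounded, any tree decomposition of width $t$ can be inflated into a generalized hypertree decomposition of width $O(t)$ by covering each bag greedily with hyperedges (one per vertex of the bag suffices); in particular, bounded treewidth implies bounded generalized hypertree width in this setting. A tree decomposition of bounded width is constructible in polynomial time by Bodlaender's theorem, and plugging the resulting hypertree decomposition of width $O(c)$ into Theorem \ref{thm:guarded} yields a polynomial time counting algorithm for $\sCQ$.

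The main direction is $\ref{bnd:2} \Rightarrow \ref{bnd:3}$, which I handle in two parts. The bound on $S$-star size is a direct application of Lemma \ref{lem:tightgeneral}, using the fact that $\W{1} \ne \FPT$ entails $\sW{1} \ne \FPT$ (since $p$-$\mathrm{Clique}$ FPT Turing-reduces to $p$-$\mathrm{\#Clique}$). For the bound on treewidth, consider the projection $\calG^{*} := \{\calH \mid \exists S \colon (\calH,S) \in \calG\}$, which is recursively enumerable and of bounded arity since $\calG$ is. I reduce Boolean $\CQ$ on instances with hypergraph in $\calG^{*}$ to $\sCQ$ on $\calG$ and invoke Theorem \ref{thm:grohe}. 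Given a Boolean $\CQ$ instance $\Phi^{\mathrm{Bool}} = (\calA, \phi^{\mathrm{Bool}})$ with hypergraph $\calH \in \calG^{*}$, I locate some $S$ with $(\calH, S) \in \calG$ by running the enumeration of $\calG$ until such a pair appears; the cost is bounded by a computable function of $|\phi^{\mathrm{Bool}}|$ (the parameter), hence FPT. I then form the $\sCQ$ instance $\Phi = (\calA, \phi)$ using the same atoms as $\phi^{\mathrm{Bool}}$ but declaring the variables corresponding to $S$ free and the rest existentially quantified, so that the $S$-hypergraph of $\Phi$ is $(\calH, S) \in \calG$. The Boolean query holds iff $|\phi(\calA)| > 0$, so the hypothesized FPT counting algorithm for $\sCQ$ on $\calG$ yields an FPT decision procedure for Boolean $\CQ$ on $\calG^{*}$. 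Theorem \ref{thm:grohe} now forces $\calG^{*}$ to have bounded treewidth, completing the argument.

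The main obstacle is the reduction step in $\ref{bnd:2} \Rightarrow \ref{bnd:3}$: one must convert counting on $\calG$ into decision on $\calG^{*}$ while preserving FPT control of the parameter. This works because the free variable set $S$ can be recovered from the r.e.\ enumeration of $\calG$ at cost depending only on the formula size, and because Grohe's dichotomy speaks about the underlying hypergraph rather than the full $S$-hypergraph. Combining the two bounds established in the main direction yields condition \ref{bnd:3} and closes the cycle.
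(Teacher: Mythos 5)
Your proof is correct and follows essentially the same route as the paper: the easy direction via Theorem~\ref{thm:guarded}, the $S$-star-size bound via Lemma~\ref{lem:tightgeneral} (with the observation that $\W{1}\ne\FPT$ entails $\sW{1}\ne\FPT$, which the paper leaves implicit), and the treewidth bound by projecting to the hypergraph class $\calG^*$, reducing Boolean $\CQ$ to counting via a recursive enumeration of $\calG$ to recover $S$, and invoking Theorem~\ref{thm:grohe}. The only cosmetic difference is that you argue direction $\ref{bnd:2}\Rightarrow\ref{bnd:3}$ directly while the paper phrases it as a contradiction, and you spell out the (bounded arity implies bounded generalized hypertree width) step for $\ref{bnd:3}\Rightarrow\ref{bnd:1}$, which the paper elides.
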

\begin{proof}
The direction \ref{bnd:1} $\rightarrow$ \ref{bnd:2} is trivial. Furthermore, \ref{bnd:3} $\rightarrow$ \ref{bnd:1} follows directly from Theorem \ref{thm:guarded}. So it remains only to show \ref{bnd:2} $\rightarrow$ \ref{bnd:3}.

So assume that there is a recursively enumerable class~$\calG$ of $S$-hypergraphs such that counting solutions to $\sCQ$-instances whose $S$-hypergraph are in $\calG$ is fixed parameter tractable but \ref{bnd:3} is not satisfied by $\calG$. From Lemma \ref{lem:tightgeneral} we know that the $S$-starsize of $\calG$ must be bounded, so it follows that the treewidth of $\calG$ must be unbounded.

We construct a class $\calG'$ of hypergraphs by doing the following: For each $S$-hypergraph $(\calH,S)$ in $\calG$ we add $\calH$ to~$\calG'$. Clearly $\calG'$ is recursively enumerable and of unbounded treewidth. We will show that Boolean $\CQ$ for all instances whose hypergraphs are in $\calG'$ is fixed parameter tractable parameterized by the size of the formula. This leads to a contradiction with Theorem \ref{thm:grohe}.
 
 Because $\calG$ is recursively enumerable, there is an algorithm that that for each $\calH$ in $\calG'$ constructs an $S$-hypergraph $(\calH,S)$ in~$\calG$ that has lead to the addition of $\calH$ to $\calG'$. For example one can simply enumerate all $S$-hypergraphs in $\calG$ until finding such a $(\calH,S)$. Let $f(\calH)$ be the number of steps the algorithm needs on input $\calH$. The function $f(\calH)$ is well defined and computable. We then define $g:\mathbb{N} \rightarrow \mathbb{N}$ by setting $g(k) := \max_\calH(f(\calH))$, where the maximum is over all hypergraphs $\calH$ of size $k$ in $\calG'$. Clearly, $g$ is again well defined and computable. Thus for each $\calH$ in $\calG'$ we can compute in time $g(|\calH|)$ an $S$-hypergraph $(\calH,S)$ in $\calG$.
 
 Now let $\Phi=(\calA,\phi)$ be a $\CQ$-instance with hypergraph $\calH$ in~$\calG'$. To solve it we first compute $(\calH,S)$ as above and construct a $\sCQ$-instance $\Psi = (\calA,\psi)$ with $(\calH,S)$ as associated $S$-hypergraph for $\psi$  by adding existential quantifiers for all variables not in $S$. 
 Obviously $\Phi$ has solutions if and only if $\Psi$ has any. 
  But by assumption the solutions of $\Psi$ can be counted in time $h(|\psi|)\|\Psi\|^{O(1)}$ for some computable function $h$, so $\Phi$ can be decided in time $(g(|\phi|) + h(|\phi|)) \|\Phi\|^{O(1)}$ and thus is fixed parameter tractable. This is the desired contradiction to Theorem \ref{thm:grohe}.
 \end{proof}

\begin{remark} Note that our characterization relies on the underlying hypergraph structures of the query. In~\cite{Grohe07,DalmauJ04}, the corresponding characterizations are stronger in the sense that they are true for any recursively enumerable class of conjunctive formulas. Also these results and the one from \cite{GroheSS01} can be proved for every fixed vocabulary, while our proofs of the Lemmas \ref{lem:hardforstarsB} and \ref{lem:tightgeneral} and thus also Theorem \ref{thm:bounded} rely on the fact that we can choose our vocabulary in the construction. It remains an open question whether our result can be improved similarly to the others.  

Also, the result in~\cite{DalmauJ04} (for quantifier free $\sCQ$) is proved under the weaker assumption $\sW{1} \ne \FPT$. Showing the same equivalent result for general $\sCQ$ seems to be hard since our case also contains decision problems (e.g.\ $\sCQ$ with no free variables).
\end{remark}

\section{Computing star size}

In this section we consider the problem of computing the quantified star size of bounded width instances. Observe that the computation of quantified star size is not strictly necessary. The algorithm of Theorem~\ref{thm:guarded} does not need to find $S$-stars for graphs of width $k$ but only for acyclic hypergraphs, which is easy by Lemma \ref{lem:edgecover}. Still it is of course desirable to know the quantified star size of an instance before applying the counting algorithm, because quantified star size has an exponential influence on the runtime. We show that for all decomposition techniques considered in this paper the quantified star size can be computed rather efficiently, roughly in $|V|^{O(k)}$ where $k$ is the width of the input. For small values of $k$, this bound is reasonable. We then proceed by showing that, on the one hand, for some decomposition measures such as treewidth or hingetree, the computation of quantified star size is even fixed parameter tractable parameterized by the width. On the other hand, we show that for decomposition measures above hypertree width it is unlikely that fixed parameter tractability can be obtained (under standard assumptions).

Instead of tackling quantified star size directly, we consider the combinatorially less complicated notion of independent sets, which is justified by the following observation:

\begin{observation}\label{obs:equivalent}
 Let $\beta$ be any decomposition technique considered in this paper. Then for every $k\in \mathbb{N}$ computing the $S$-starsize of $S$-hypergraphs of $\beta$-width at most $k$ polynomial time Turing-reduces to computing the size of a maximum independent set for hypergraphs of $\beta$-width at most $k$. Furthermore, there is a polynomial time many one reduction from computing the size of a maximum independent set in hypergraphs of $\beta$-width at most $k$ to computing the $S$-star size of hypergraphs of $\beta$-width at most $k+1$.
\end{observation}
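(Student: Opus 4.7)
The plan is to prove each of the two reductions separately.

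For the Turing reduction, given an $S$-hypergraph $(\calH, S)$ of $\beta$-width at most $k$, I would first enumerate the $S$-components $\calH[V'_1], \ldots, \calH[V'_m]$, which takes polynomial time since this only requires computing connected components of $\calH[V-S]$. For each $i$, set $V''_i := V'_i \cap S$ and query the oracle on the induced subhypergraph $\calH[V''_i]$. The key observation is that a subset $I \subseteq V''_i$ is independent in $\calH[V'_i]$ if and only if it is independent in $\calH[V''_i]$: the edges of $\calH[V''_i]$ are exactly the non-empty intersections $e \cap V''_i$ of edges $e$ of $\calH[V'_i]$, so any two vertices of $I$ share an edge in one hypergraph if and only if they share one in the other. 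By Observation~\ref{obs:subhypergraphwidth} each $\calH[V''_i]$ has $\beta$-width at most $k$, so the oracle calls are legitimate. The $S$-star size of $(\calH,S)$ is then the maximum of the returned values.

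For the many-one reduction, given a hypergraph $\calH = (V, E)$ of $\beta$-width at most $k$, I would construct the $S$-hypergraph $(\calH', S)$ with $V' := V \cup \{z\}$ for a fresh vertex $z$, $E' := \{e \cup \{z\} \mid e \in E\}$, and $S := V$. Then $\calH'[V' - S] = \calH'[\{z\}]$ has the single vertex $z$, hence a unique $S$-component, and since every $v \in V$ lies in some edge $e \cup \{z\}$ (we may assume $\calH$ has no isolated vertices, which can be handled separately), this $S$-component is all of $\calH'$. Two $V$-vertices share an edge in $\calH'$ if and only if they share an edge in $\calH$, so $S$-independent sets in $\calH'$ coincide exactly with independent sets of $\calH$; hence the $S$-star size of $(\calH', S)$ equals the maximum independent set size of $\calH$.

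The main obstacle is verifying that $\calH'$ has $\beta$-width at most $k+1$ under each of the decomposition methods considered in the paper. Starting from a $\beta$-decomposition $(\calT, (\lambda_t)_{t \in T}, (\chi_t)_{t \in T})$ of $\calH$ of width $k$, I would lift it to $\calH'$ by replacing each guard edge $e \in \lambda_t$ with $e \cup \{z\}$ and setting $\chi'_t := \chi_t \cup \{z\}$. Connectedness is immediate since $z$ is in every bag; coverage holds because each new edge $e \cup \{z\}$ sits in $\chi'_t$ whenever $e \subseteq \chi_t$; and the guard condition $\chi'_t \subseteq \bigcup \lambda'_t$ follows directly. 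For hingetree decomposition the conditions $4$--$6$ lift transparently: if $e_1 \in \lambda_{t_1}$ and $e_2 \in \lambda_{t_2}$ witnessed $\chi_{t_1} \cap \chi_{t_2} \subseteq e_1 \cap e_2$, then $e_1 \cup \{z\}$ and $e_2 \cup \{z\}$ witness the analogous containment after adding $z$, while conditions $5$ and $6$ transfer unchanged. For tree decomposition one instead adds $z$ to every bag of a tree decomposition of $\calH$'s primal graph, which increases each bag size, and hence the width, by exactly one. Analogous case checks handle the remaining techniques listed in the paper.
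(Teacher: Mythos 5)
Your proof is correct and follows the same approach as the paper: the forward direction restricts to induced subhypergraphs of the $S$-components (you additionally observe that one should restrict to $V'_i \cap S$, which is the right reading of the paper's terse argument), and the backward direction adds a new universal vertex $z$ to every edge and sets $S := V$. You actually fill in details the paper glosses over, such as handling isolated vertices and explicitly verifying the lifted decompositions for each width measure.
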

\begin{proof}
 By definition computing $S$-starsize reduces to the computation of independent sets of $S$-components. $S$-components are induced subhypergraphs, so we get the first direction form Observation \ref{obs:subhypergraphwidth}.
 
 For the other direction let $\calH=(V,E)$ be a hypergraph for which we want to compute the size of a maximum independent set. Let $x\not\in V$. We construct the hypergraph $\calH'$ of vertex set $V'=V\cup \{x\}$ and edge set $E'=\{e\cup\{x\}\mid e\in E\}$ and set  $S:=V$. The hypergraph is one single $S$-component, because $x$ is in every edge. Furthermore, the $S$-starsize of $\calH'$ is obviously the size of a maximum independent set in $\calH$. It is easy to see that the construction increases the treewidth of the hypergraph by at most $1$ and does not increase the $\beta$-width for all other decomposition considered here at all.
\end{proof}

Because of Observation \ref{obs:equivalent} we will not talk about $S$-star size in this section anymore but instead formulate everything with independent sets.

\subsection{Exact computation}

\begin{proposition}\label{lem:indsetnk}
 There is an algorithm that given a hypergraph $\calH=(V,E)$ and a generalized hypertree decomposition $\varXi=(\calT, (\lambda_t)_{t\in T}, (\chi_t)_{t\in T})$ of $\calH$ of width $k$ computes a maximum independent set of $\calH$ in time $k |V|^{O(k)}$.
\end{proposition}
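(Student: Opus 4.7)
The plan is to run a standard bottom-up dynamic program on the decomposition $\varXi$, rooted arbitrarily. For each node $t \in T$ and each subset $I \subseteq \chi_t$ that is independent in $\calH$, I would store $f(t, I)$, the size of a maximum independent set $J$ of the induced subhypergraph $\calH[\chi(\calT_t)]$ satisfying $J \cap \chi_t = I$. The answer is then $\max_I f(r, I)$ at the root $r$, and the actual independent set is recovered by standard backtracking.

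The key observation making the runtime work is that every independent $I \subseteq \chi_t$ has size at most $k$. Indeed, by property~3 of the decomposition, each $v \in \chi_t$ belongs to some $e \in \lambda_t$, and since $I$ is independent no two of its vertices can share such an edge; with $|\lambda_t| \leq k$ this forces $|I| \leq k$. Consequently there are at most $\binom{|V|}{\leq k} = O(|V|^k)$ candidate states per node, and each can be tested for independence in time polynomial in $k$ and $|V|$ using the list of atoms of $\varXi$.

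For the recurrence I would use
\[
f(t, I) \;=\; |I| \;+\; \sum_{c \in \mathrm{children}(t)} \max_{J} \bigl( f(c, J) - |J \cap \chi_t| \bigr),
\]
the maximum being over states $J$ at $c$ such that $J \cap \chi_t = I \cap \chi_c$. The delicate part is correctness, specifically verifying that the composed set $I \cup \bigcup_c J_c$ is independent in $\calH$. Given any edge $e$ containing two selected vertices $u,v$, property~2 places $e$ inside some bag $\chi_{t_e}$; applying the connectedness condition to the chains of bags containing $u$ and $v$ and to the rooted structure of $\calT$, one shows that $u$ and $v$ must both lie in some common bag where the local compatibility constraint already forbids their simultaneous selection. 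This is the main obstacle, since it is the only nontrivial place where both hypertree-decomposition axioms are needed; the rest of the DP is routine.

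For the runtime, each of the $O(|V|)$ nodes carries $O(|V|^k)$ states, and processing one parent-child pair of states (compatibility check plus size computation) costs $O(k)$, so in total one pays $O(|V| \cdot |V|^{2k} \cdot k)$, which is absorbed by $k |V|^{O(k)}$ as claimed. If the input decomposition has $|T| = \omega(|V|)$ one can first preprocess it to have at most $|V|$ bags by standard contraction of redundant nodes, preserving width.
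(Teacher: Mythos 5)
Your proposal is correct and follows essentially the same dynamic program as the paper: states are independent subsets of each bag (which you correctly bound to size at most $k$ via property~3 and $|\lambda_t|\le k$), transitions enforce agreement on $\chi_t\cap\chi_c$, and correctness of the assembled set rests on the same two decomposition axioms. The only cosmetic difference is that you tabulate sizes $f(t,I)$ and recover the set by backtracking, whereas the paper stores the optimal sets $I_{b,\sigma}$ directly.
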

\begin{proof}
 We apply dynamic programming along the decomposition. Let $b= (\lambda, \chi)$ be a guarded block of $\calT$. Let $\calT_b$ be the subtree of $\calT$ with $b$ as its root. We set $V_b := \chi(\calT_b)$. Observe that $I\subseteq V_b$ is independent in $\calH$ if and only if it is independent in $\calH[V_b]$ so we do not differentiate between the two notions. For each independent set $\sigma\subseteq \chi$ we will compute an independent set $I_{b, \sigma} \subseteq V_b$ that is maximum under the independent sets containing exactly the vertices $\sigma$ from $\chi$. Observe that because $\lambda$ contains at most $k$ edges that cover $\chi$ we have to compute at most $k n^k$ independent sets $I_{b, \sigma}$ for each $b$.
 
 If $b$ is a leaf of $\calT$, the construction of the $I_{b, \sigma}$ is straightforward and can certainly be done in time $k|V|^{O(k)}$.
 
 Let now $b= (\lambda, \chi)$ be an inner vertex of $\calT$ with children $b_1, \ldots, b_r$. For each independent set $\sigma \subseteq \chi$ we do the following: Let $b_i= (\lambda_i, \chi_i)$, then let $\sigma_i$ be an independent set of $\chi_i$ such that $\sigma \cap \chi \cap \chi_i = \sigma_i \cap \chi \cap \chi_i$ and $|I_{b_i, \sigma_i}|$ is maximal. We claim that we can set $I_{b, \sigma} := \sigma \cup I_{b_1, \sigma_1} \cup \ldots \cup I_{b_r, \sigma_r}$.
 
 We first show that $I_{b, \sigma}$ defined this way is independent. Assume this is not true, then $I_{b, \sigma}$ contains $x,y$ that are in one common edge $e$ in $\calH[V_b]$. But then $x,y$ do not lie both in $\chi$, because $I_{b, \sigma}\cap \chi = \sigma$ and $\sigma$ is independent. By induction $x,y$ do not lie in one $V_{b_i}$ either. Assume that $x \in \chi$ and $y \in V_{b_i}$ for some $i$. Then certainly $x\notin V_{b_i}$ and $y\notin \chi$. But the edge $e$ must lie in one guard $\lambda'$ such that the corresponding block $\chi'$ contains $e$. Because of the connectivity condition for $y$ the guarded block $(\lambda', \chi')$ must lie in the subtree with root $b_i$, which contradicts $x \in e$. Finally, assume that $x\in V_{b_i}$ and $y \in V_{b_j}$ for $i\ne j$ and $x,y \notin \chi$. Then $x$ and $y$ cannot be adjacent because of the connectivity condition. This shows that $I_{b, \sigma}$ is indeed independent.
 
 Now assume that  $I_{b, \sigma}$ is not of maximum size and let $J\subseteq V_b$ be an independent set with $|J|>|I_{b, \sigma}|$ and $J\cap \chi = \sigma$. Because $J$ and $I_{b, \sigma}$ are fixed to $\sigma$ on $\chi$ there must be a $b_i$ such that $|J \cap V_{b_i}| > |I_{b_i, \sigma_i}|$. This contradicts the choice of $\sigma_i$. So $I_{b, \sigma}$ is indeed of maximum size.
 
 Because each block has at most $k |V|^k$ independent sets, all computations can be done in time $k |V|^{O(k)}$.
\end{proof}
%

\subsection{Parameterized complexity}

While the algorithm in the last section is nice in that it is a polynomial time algorithm for fixed $k$, it is somewhat unsatisfying for some decomposition techniques: If we can compute the composition quickly, we would ideally want to be able to compute the star size efficiently, too. Naturally we cannot expect a polynomial time algorithm independent of $k$, because independent set is $\NP$-complete, but we can hope for at least fixed parameter tractability with respect to $k$. We will show that this is indeed possible for some width measures, in particular tree decompositions and hingetree decompositions. On the other hand we show that this can likely not be extended to more general decomposition techniques, because independent set parameterized by hypertree width is $\W{1}$-hard.\stefan{rewrite this at some point...}

\begin{proposition}
  Given a hypergraph $\calH$ computing a maximum independent set in $\calH$ is fixed parameter tractable parameterized by the treewidth of $\calH$.
\end{proposition}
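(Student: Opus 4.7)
The plan is to reduce the problem to the well-known maximum independent set problem on graphs of bounded treewidth and to reuse essentially the same dynamic programming as in Proposition~\ref{lem:indsetnk}, but now exploiting that bags are of bounded size rather than merely covered by boundedly many guards.

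First I would observe that a set $I\subseteq V$ is independent in $\calH=(V,E)$ if and only if it is independent in the primal graph $\calH_P$: by definition two vertices are adjacent in $\calH_P$ exactly when they are contained in a common hyperedge of $\calH$. So the problem of computing a maximum independent set of $\calH$ is identical to computing a maximum independent set of $\calH_P$, and the treewidth of $\calH$ is by definition the treewidth of $\calH_P$. Next, using Bodlaender's algorithm (cited above as item~c)), I would compute in FPT time parameterized by $k:=\tw(\calH)$ a tree decomposition $(\calT,(\chi_t)_{t\in T})$ of $\calH_P$ of width $k$, so that $|\chi_t|\le k+1$ for every $t\in T$.

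Finally I would run the standard bottom-up dynamic programming along $\calT$, which is formally the same construction as in the proof of Proposition~\ref{lem:indsetnk}: for every node $t$ and every independent set $\sigma\subseteq\chi_t$ of $\calH_P[\chi_t]$, compute a maximum independent set $I_{t,\sigma}$ of $\calH_P[\chi(\calT_t)]$ whose trace on $\chi_t$ is exactly $\sigma$, by combining $\sigma$ with, for each child $t_i$, an entry $I_{t_i,\sigma_i}$ maximizing $|I_{t_i,\sigma_i}|$ among all $\sigma_i$ compatible with $\sigma$ on $\chi_t\cap\chi_{t_i}$. Correctness follows exactly as in Proposition~\ref{lem:indsetnk} from the connectivity condition of tree decompositions. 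The essential gain is that since $|\chi_t|\le k+1$, each node has at most $2^{k+1}$ candidate subsets $\sigma$ to enumerate, so the whole DP runs in time $2^{O(k)}\cdot|V|^{O(1)}$, which is fixed parameter tractable in $k$.

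There is no real obstacle to carry out: the only ingredients needed are Bodlaender's FPT computation of an optimal tree decomposition and the equivalence of independent sets in $\calH$ and in $\calH_P$, after which the DP from Proposition~\ref{lem:indsetnk} applies verbatim with the sharper bound on bag size giving the $2^{O(k)}$ dependence instead of $|V|^{O(k)}$.
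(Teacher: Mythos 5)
Your proof is correct and is exactly the ``straightforward dynamic programming'' route the paper invokes in one line; you have simply spelled out the standard details (pass to the primal graph, compute an optimal tree decomposition via Bodlaender, run the bag-based DP in $2^{O(k)}\cdot|V|^{O(1)}$ time). The paper also mentions Courcelle's theorem as an alternative, but your choice is the one the paper primarily points to.
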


This can be seen either by applying Courcelle's Theorem of by straightforward dynamic programming. Interestingly, one can show the same result also for bounded hingetree width, which is a decomposition technique in which the bags are of unbounded size. This unbounded size makes the dynamic programming in the proof far more involved than for treewidth.

\begin{proposition}
  Given a hypergraph $\calH$ computing a maximum independent set in $\calH$ is fixed parameter tractable parameterized by the hingetree width of $\calH$.
\end{proposition}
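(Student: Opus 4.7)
My approach is to perform dynamic programming along a hingetree decomposition, exploiting two structural facts. First, by the cited result of Gyssens, Jeavons and Cohen, a hingetree decomposition $(\calT, (\lambda_t)_{t\in T}, (\chi_t)_{t\in T})$ of minimum width $k$ of $\calH$ can be computed in polynomial time, and I may root $\calT$ arbitrarily. Second, combining hingetree properties~4 and~6 yields a subsumption lemma: if $e\in E$ is contained in $\chi_t$, then either $e\in\lambda_t$, or $e\in \lambda_{t'}$ for some $t'\neq t$, in which case $e\subseteq \chi_t\cap \chi_{t'}\subseteq e_1\cap e_2$ for some $e_1\in\lambda_t$, $e_2\in\lambda_{t'}$; in either case some edge of $\lambda_t$ contains $e$. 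Hence the independent-set constraints inside $\chi_t$ are exactly those imposed by the $k':=|\lambda_t|\le k$ guard edges, and in particular every independent set $I$ of $\calH$ satisfies $|I\cap\chi_t|\le k$.

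For each bag $t$ let $S_0:=\chi_t\cap\chi_{\mathrm{parent}(t)}$. Property~4 of hingetree places $S_0$ inside a single hyperedge of $\lambda_t$, so $|I\cap S_0|\le 1$. I define $\mathrm{DP}[t][\sigma_0]$, for $\sigma_0\in\{\bot\}\cup S_0$, to be the maximum size of an independent set of $\calH$ contained in $\chi(\calT_t)$ whose intersection with $S_0$ equals $\sigma_0$ (empty if $\sigma_0=\bot$). Each bag thus carries $|S_0|+1=O(|V|)$ states.

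The transition at a bag $t$ with guards $\{e_1,\dots,e_{k'}\}$, children $t_1,\dots,t_r$, and separators $S_i=\chi_t\cap\chi_{t_i}\subseteq e_{\alpha(i)}$ for some $\alpha(i)\in[k']$ enumerates \emph{patterns} $\calA$, i.e.\ partial partitions of $[k']$ prescribing the types of IS vertices in $\chi_t$. Concretely, $\calA$ encodes that $I\cap \chi_t$ contains exactly one vertex $v_A$ from each cell $e_A^{\circ}:=\bigcap_{j\in A}e_j\setminus \bigcup_{j\notin A}e_j$, one for each $A\in\calA$. The number of such patterns is at most the Bell number $B(k'+1)$, bounded by a function $f(k)$. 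Consistency with the parent state is immediate: if $\sigma_0\neq\bot$, some $A\in\calA$ must equal $\{j:\sigma_0\in e_j\}$, and then $v_A=\sigma_0$ is forced. Because the classes $A\in\calA$ are pairwise disjoint and, by the subsumption lemma, no further independence constraints arise across classes, the best $v_A$ for each $A$ can be chosen independently: one maximises $v\mapsto 1+\sum_{i:\alpha(i)\in A} c_i(v)$ over $v\in e_A^{\circ}$, where $c_i(v)=\mathrm{DP}[t_i][v]-1$ if $v\in S_i$ and $c_i(v)=\mathrm{DP}[t_i][\bot]$ otherwise; children $t_i$ with $\alpha(i)\notin\bigcup\calA$ contribute the constant $\mathrm{DP}[t_i][\bot]$. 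Each pattern evaluation costs $O(|V|\cdot r_t)$, yielding total time $f(k)\cdot\|\calH\|^{O(1)}$.

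The principal obstacle is justifying this pattern-based optimisation: I must argue that every restriction $I\cap \chi_t$ arises as $\{v_A:A\in\calA\}$ for a unique pattern (this is straightforward: two IS vertices lying in a common $e_j$ would violate independence, hence their ``types'' $A$ are disjoint), and that choices across classes do not interact even when several children share the same $\alpha(i)$. The latter holds because each child's DP already packages all information from its subtree into a function of $I\cap S_i$, and $I\cap S_i$ is determined solely by $v_A$ for the unique $A\in\calA$ with $\alpha(i)\in A$ (or is empty when no such $A$ exists). Once these two points are checked, the algorithm and its $f(k)\cdot\|\calH\|^{O(1)}$ running-time bound follow by routine bookkeeping.
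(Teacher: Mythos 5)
Your proposal takes essentially the same route as the paper: bottom-up dynamic programming along a hingetree decomposition, with a table indexed by the at-most-one independent-set vertex lying in the parent separator $S_0=\chi_t\cap\chi_{\mathrm{parent}(t)}$, and with the per-bag work organized around the at most $2^{k}$ equivalence classes of $\chi_t$ under ``same membership in the guard edges'' (your cells $e_A^\circ$ are precisely the paper's equivalence classes, and your pattern $\calA$ plays the role of the paper's independent set $\sigma$ of representatives). Two small points need closing. First, your subsumption lemma only covers edges $e\subseteq\chi_t$, but the claim that all independence constraints among vertices of $\chi_t$ come from $\lambda_t$ also needs to handle $u,v\in\chi_t\cap e$ with $e\not\subseteq\chi_t$: take $t^*$ with $e\subseteq\chi_{t^*}$ (property 2 of the decomposition), so $u,v\in\chi_t\cap\chi_{t^*}$, which property 4 places inside a single edge of $\lambda_t$. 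Second, when $\sigma_0=\bot$ your definition of $\mathrm{DP}[t][\bot]$ requires $I\cap S_0=\emptyset$, so the per-cell maximization must be restricted to $v\in e_A^\circ\setminus S_0$; as written the transition could pick $v_A\in S_0$ and violate the boundary condition. (For $\sigma_0\ne\bot$ this is automatic: $S_0$ lies in a single guard edge whose index belongs to the cell of $\sigma_0$, and cells in $\calA$ are pairwise disjoint, so no other $v_{A'}$ can lie in $S_0$.) With these fixes the correctness argument and the $f(k)\cdot\|\calH\|^{O(1)}$ bound go through as you describe, and the overall structure matches the paper's proof.
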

\begin{proof}
First observe that minimum width hingetree decompositions can be computed in polynomial time \cite{GyssensJC94}, so we simply assume that a decomposition is given in the rest of the proof.

 The proof has some similarity with that of Proposition~\ref{lem:indsetnk}, so we use some notation from there. For guarded block $(\lambda, \chi)$ we will again compute maximum independent sets containing prescribed vertices. The difference is, that we can take these prescribed sets to be of size $1$: because of the hingetree condition,  only one vertex of a block may be reused in any independent set in the parent. The second idea is that we can use equivalence classes of vertices in the computations of independent sets in the considered guarded blocks, which limits the number of independent sets we have to consider. We now describe the computation in detail.
 
 Let $\varXi= (\calT, (\lambda_t)_{t\in T}, (\chi_t)_{t\in T})$ be a hingetree decomposition of $\calH$ of width $k$. Let $b=(\lambda, \chi)$ be a guarded block of~$\varXi$ and let $b'=(\lambda', \chi')$ be its parent. As before, let  $\calT_b$ be the subtree of $\calT$ with $b$ as its root and $V_b := \chi(\calT_b)$. Set $\calH_b := (V_b, E_b)$ with $E_b := \bigcup \lambda^*$ with the union being over all guarded blocks in $\calT_b$. The main idea is to iteratively compute, for all vertices  $v\in \chi' \cap \chi$, a maximum independent set $J_{v,b}$ in $\calH_b = (V_b, E_b)$ containing $v$. Furthermore, we also compute an independent set $J_{\emptyset,b}$ that contains no vertices of $\chi' \cap \chi$.   
 Note that, since $\chi\subseteq \bigcup_{e\in \lambda} e$, there are no isolated vertices in $\chi$ and the size of a maximum independent set is bounded by $k$ in each block.
   
 For a node $b=(\lambda, \chi)$, we organize the vertices in $\chi$ into at most $2^k$ equivalence classes by defining $v$ and $u$ to be equivalent if they lie in the same subset of edges of $\lambda$. The equivalence class of $v$ is denoted by $\classe{v}$. For each class, a representant is fixed. We denote by $\bar v$, the representant of the equivalence class of $v$ and by $\bar \chi\subseteq \chi$, the restriction of $\chi$ on these at most $2^k$ representants.
 
 Let first $b$ be a leaf. We  first compute independent sets on $\bar \chi$. Observe that the independent sets are invariant under the choice of representants. For each equivalence class  $\classe{v}$, we compute  $J_{\bar v,b}\subseteq \bar\chi$ as a maximum independent set containing $\bar v$. Computing the classes and a choice of  maximum independent sets containing each $\bar v$ can be done in time $k 2^{k^2}$ because independent sets cannot be bigger than $k$. Clearly, $J_{v,b}$, a maximum independent set containing $v$, can be easily computed from the set $J_{\bar v,b}$. Thus, one can compute all the $J_{v,b}$ in time $k 2^{k^2}n$. 
 The computation of $J_{\emptyset, b}$ can be done on representants, too, by simply excluding the vertices from $\chi' \cap \chi$. 
 
 Let $b$ now be an inner vertex 
 and $b_1, b_2,..., b_m$ be its children with $b_i=(\lambda_i, \chi_i)$, $i\in [m]$. We again consider equivalence classes on $\chi$. Fix $v\in \chi$ and compute the list $L_{\bar v,b}$ of all independent sets $\sigma\subseteq \bar \chi$ containing $\bar v$. Fix now $\sigma\in L_{\bar v,b}$. We first compute a set $J_{v,b}^{\sigma}$ as a maximum independent set of $\calH_b$ containing $v$ and whose  vertices in $\chi$  have the representants  $\sigma$. We will distinguish for a given vertex $\bar u\in \sigma$ if it is the representant of a vertex belonging to the block of some (or several) children of $b$ or if it represents vertices of $\chi\backslash (\bigcup_{i= 1}^m \chi_i)$  only. Therefore we partition $\sigma$ into  $\sigma', \sigma''$ accordingly:
 \begin{itemize}
 \item $\sigma := \sigma' \cup \sigma''$
 \item $\sigma' := \bar\chi \cap \{\bar u \mid u \in \bigcup_{i= 1}^m \chi_i\}$.
 \item $\sigma'':= \bar{\chi} \backslash \{\bar u \mid u \in \bigcup_{i= 1}^m \chi_i\}$
 \end{itemize}
 
  Set $\sigma':=\{\bar u_1,...,\bar u_h\}$ 
  with $h\leq m$.
 Let us examine the consequences of $\calT$ being a hingetree decomposition. We have that, for all $i\in [m]$, there exists $e_i\in \lambda$, such that $\chi\cap \chi_i \subseteq e_i$. Thus, since $\sigma$ is an independent set in $\bar{\chi}\subseteq \chi$, at most one vertex in $\sigma'$ is a representant of a vertex in $\chi_i$. Thus
 
 \begin{equation}\label{equ:distinctclasses}
 \forall u\neq u' \in \sigma \colon \chi_i\cap \classe{u}= \emptyset \vee \chi_i\cap \classe{u'}= \emptyset .
 \end{equation} 
 
We denote by $S_i=\{j\mid  \classe{u_i}\cap\chi_j \neq \emptyset\}$ and by $S=[m]\backslash \bigcup S_i$. By (\ref{equ:distinctclasses}) the sets $S_1,...,S_h,S$ form a partition of~$[m]$. To construct $J_{v,b}^{\sigma}$, we now determine for each $i\leq h$, which vertex $u$ of $\classe{u_i}$ can contribute the most, by taking the union of all the maximum independent sets $J_{u,b_j}$, $j\in S_i$, it induces at the level of the children of $b$.  

For each fixed $u\in \classe{u_i}$,
let 
\[I_{i,u}=  \{u\}\cup \bigcup_{j\in S_i} J_{u,b_j}, \]
where we set $J_{u,b_j}:=J_{\emptyset,b_j}$ if $u\notin \chi_j$.
Let then $I_i=I_{i,u}$ for some $u\in \classe{u_i}$ for which the size of $I_{i,u}$ is maximal. 
 
 The set  $J_{v,b}^{\sigma}$ is now obtained as follows
  depending on whether  $\bar v\in \sigma''$ or $\bar v\in \sigma'$.
 If  $\bar v\in \sigma''$,  we claim that $J_{v,b}^{\sigma}$ can be chosen as
  \[ J_{v,b}^{\sigma}:=\{v\} \cup (\sigma''\backslash \{\bar v\} ) \cup \bigcup_{i=1}^h I_i \cup \bigcup_{i\in S} J_{\emptyset,b_i}.\]
 
 If $\bar v\in \sigma'$, say  $\bar v=u_1$,  we  claim that $J_{v,b}^{\sigma}$ can be chosen as
  \[ J_{v,b}^{\sigma}:=\sigma''  \cup\bigcup_{j\in S_1: v\in \chi_j} J_{v,b_j}\cup\bigcup_{j\in S_1: v\notin \chi_j} J_{\emptyset,b_j}  \cup \bigcup_{i=2}^h I_i \cup \bigcup_{i\in S} J_{\emptyset,b_i}.\]
 
 The set $J_{v,b}$ is taken as one of the sets  $J_{v,b}^{\sigma}$ of maximal size for a $\sigma\in L_{v,b}$.
 To compute $J_{\emptyset, b}$, the arguments are similar.

 We first show that all $J_{v, b}$ are indeed independent sets in~$\calH_b$. Clearly, it is enough to prove this for any  $J_{v,b}^{\sigma}$. There will be no reason to distinguish whether  $\bar v\in \sigma''$ or $\bar v\in \sigma'$, because our arguments will apply to all $J_{v, b}^{\sigma}$ independent of the choice of a distinguished element $v$.
 We will make extensive use of the two following facts.

 \begin{itemize}
 \item Let $j,j'\in [m]$ and $I\subseteq V_{b_j},I'\subseteq V_{b_{j'}}$ independent sets of $\calH_{b_j}$ and $\calH_{b_j'}$ respectively. By the connectivity condition for tree decomposition we have
 \[ I\cap I' \subseteq \chi_j\cap \chi_{j'}\cap \chi. \]
 This permits to investigate the intersection of two independent sets $I, I'$ by looking at their restriction on $\chi$.
 
 \item Let now $I\subseteq V_{b_j}$ be an independent set of $\calH_{b_j}$. Then,  $I$ remains an independent set in $\calH_b$. Indeed, suppose there is a $e\in E_b\backslash E_{b_j}$ containing two vertices $y_1,y_2\in I$. Since all edges must belong to a guard, there exists a node $b^*=(\lambda^*, \chi^*)$ such that $e\in \lambda^*$. Then, since in a hingetree decomposition we have $\chi^*=\bigcup \lambda^*$, then $\{y_1,y_2\}\subseteq  e\subseteq \chi^*$. But then, by the connectivity condition it follows that $\{y_1,y_2\}\subseteq  \chi$.  Hence, by the intersection property of hingetree decomposition, there exists $e_j\in \chi_j$ such that 
  \[ \{y_1,y_2\}\subseteq \chi \cap \chi_j \cap e_j \]
\noindent which implies that $y_1$ and $y_2$ are adjacent in $\calH_{b_j}$. Contradiction. 
\end{itemize}
 
 We now start the proof that $J_{v,b}^{\sigma}$ is independent incrementally.
 Let $i\in [h]$, $u\in \classe{u_i}$ and $j\in S_i$ and consider the set $I:=J_{u,b_j}$. By induction, the set $I$ is independent in $\calH_{b_j}$. By the hingetree condition, there exists $e_j\in\lambda_j$ such that $\chi\cap \chi_j\subseteq e_j$. By the connectivity condition, this implies 
 $\chi\cap I\subseteq e_j$. Then, since $I$ is an independent set, no two vertices of $\chi$ can belong to $I$ i.e.\ $|\chi\cap I|\leq 1$. 
 The connectivity condition also implies that, for $j'\neq j$, $V_{b_{j'}}\cap I\subseteq \chi\cap \chi_j$, hence  $|V_{b_{j'}}\cap I|\leq 1$ and $I$ is an independent set of $\calH_b$.
 Finally, the set $I_i=\bigcup_{j\in S_i} J_{u,b_j}$ is also an independent set of $\calH_b$, since  for any distinct $j,j'\in S_i$:
  \[ J_{u,b_j}\cap J_{u,b_{j'}} \subseteq \chi_j\cap \chi_{j'} \cap \chi \subseteq e_j.\] 
 
 Hence $ J_{u,b_j}\cap J_{u,b_{j'}}$  contains at most one vertex (which is in $\chi$ and could then only be $u$).
  
  Let now $i,i'\in [m]$ be distinct. By the arguments above, $I_i$ (resp. $I_{i'}$) contains at most one element $u$ (resp. $u'$) such that $u\in \classe{u_i}$ (resp. $u'\in \classe{u_{i'}}$). By Equation~\ref{equ:distinctclasses}, we have that the two classes are distinct and that $u_i\neq u_{i'}$. But $u_i,u_{i'}\in \sigma$ and $\sigma$ is independent in $\chi$. Hence, $u_i,u_{i'}$ cannot be adjacent in $\calH_b$. 
 Consequently,
   \[   \bigcup_{i=1}^h I_i \]
 \noindent is an independent set in $\calH_b$.

 Let $j\in S$. $J_{\emptyset,b_j}$ is independent in $\calH_{b_j}$ and $J_{\emptyset,b_j}\subseteq V_{b_j}\backslash \chi$. Hence, $J_{\emptyset,b_j}$ is independent in $\calH_{b}$. This also implies that, given $j'\in [m]$ distinct from $j$, 
 $J_{\emptyset,b_j}\cap V_{b_{j'}}=\emptyset$. Thus,
 \[ \bigcup_{i=1}^h I_i \cup \bigcup_{i\in S} J_{\emptyset,b_i}.\]
 \noindent is independent in $\calH_b$.
 
 Finally, by construction, for all $i\in [h]$, $I_i\cap \chi=\{u\}$  with  $\bar u=\bar u_i\in \sigma'$. 
 Also $\sigma=\sigma'\cup \sigma''$ is independent in $\chi$ hence in $\calH_b$.  
 No vertices  $y_1\in I_i$ and $y_2\in \sigma''$ can be adjacent because, again, this would imply that $\{y_1,y_2\}\subseteq \chi$ and contradict the fact that $\bar y_1,\bar y_2$ are independent in $\sigma$. Thus
 $ J_{v,b}^{\sigma}$ is independent.
 
\medskip

 We now prove that $J_{v,b}$ is of maximum size. Observe that it suffices to show this again for each $J_{v,b}^\sigma$. Each maximum independent set $J$ of $\calH_b$ that contains $v$ and whose vertices in $\chi$ have exactly the representants $\sigma$ can be expressed as $\tau\cup J_1 \cup J_2 \cup ... \cup J_m$. Here $\tau\subseteq \chi$ is an independent set of $b$ containing $v$ and whose representants are $\sigma$. Furthermore, $J_i$ is an independent set of $\calH_b$ that contains only vertices of $V_{b_i}$. The set $J_i$ may only contain one vertex $u_i$ from $\chi\cap \chi_i$. But then exchanging $J_i$ for $J_{u_i, b_i}$ may only increase the size of the independent set, so we can assume that $I$ has the form  $\tau\cup J_{u_1, b_i} \cup J_{u_2, b_2} \cup \ldots \cup J_{u_m, b_m}$ where $u_i$ may also stand for~$\emptyset$.
 
 Assume now that $J_{v,b}^\sigma$ is not maximum, i.e.\ there is an independent set $J$ containing $v$ whose vertices in $\chi$ have the representants $\sigma$ and $J$ is bigger than $J_{v,b}^\sigma$. Then one of four following things must happen:
 \begin{itemize}
  \item There is an $i$ such that $v\in \chi_i$ and $J\cap V_{b_i}$ is bigger than $J_{v,b_i}$. But this case cannot occur by induction.
  \item $v=u_1$ and there is a $j\in S_1$ such that $v\notin \chi_j$ and $|J\cap V_{b_j}|> |J_{\emptyset,b_j}|$. By induction we know that $J_{\emptyset,b_j}$ is optimal under all independent sets of $\calH_{b_j}$ not containing any vertex of $\chi_j\cap \chi$, so there must be a vertex $u\in J\cap \chi\cap \chi_j$. Since $J$ is independent,  $v$ and $u$ share no edge in $\lambda$ and then $\bar v \ne \bar u$. Since $j\in S_1$, it holds that $\classe{v}\cap \chi_j\neq \emptyset$ and by Equation~\ref{equ:distinctclasses}, $\classe{u}\cap \chi_j= \emptyset$. Contradiction.

  \item There is an $i\in S$ such that $J\cap V_{b_i}$ is bigger than $J_{\emptyset, b_i}$. But from $i\in S$ it follows by definition that $\chi \cap \chi_i \cap J= \emptyset$, so this case can not occur by induction, either.
  
  \item There is an $i\in [h]$ 
   such that $|J\cap (\bigcup_{j\in S_i} V_j)|> |I_i|$. We claim that $(\bigcup_{j\in S_i} \chi_j) \cap \chi\cap J$ contains only one vertex. Assume there are two such vertices $x$ and $y$. 
   By definition, $\bar x, \bar y\in \bar \tau$. Since $J$ is independent, $\bar x$ and $\bar y$ are not adjacent in $\bar\chi$ and $\bar x \ne \bar y$. 
   At least one of these, say $y$, must be in $\classe{u_i}$, because $\bar u_i \in \bar \tau$ by definition.  Let $x\in V_{j'}$ with $j'\in S_i$, then there is a vertex $w\in \classe{u_i} = \classe{y}$ in $\chi_{j'}\cap \chi \subseteq e_j$ by definition of $S_i$. But then $\bar x$ and $\bar y$ are adjacent in $\bar \chi$ which is a contradiction.
  
  So there is exactly one vertex $u$ in $(\bigcup_{j\in S_i} \chi_j) \cap \chi\cap J$. But then $|J\cap (\bigcup_{j\in S_i} V_j)| > I_{i,u}$. Thus either there must be a $j\in S_i$ with $u\in V_j$ such that $|J\cap V_j|> |J_{u, b_j}|$ or there must be a $j\in S_i$ with $u\notin V_j$ such that $|J\cap V_j|> |J_{\emptyset, b_j}|$. The former clearly contradicts the optimality of $J_{u, b_j}$, while the latter leads to a contradiction completely analogously to the second item above.
 \end{itemize}
 
 Because only $ k 2^{k^2} n^2$ sets have to be considered for each guarded block, this results in an algorithm with runtime $ k 2^{k^2} |V|^{O(1)}$.
\end{proof}

\begin{lemma}\label{lem:starsizehard}
 Computing maximum independent sets on hypergraphs is $\W{1}$-hard parameterized by generalized hypertree width.
\end{lemma}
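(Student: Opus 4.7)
The natural strategy is a parameterized many-one reduction from $p$-$\mathrm{Clique}$. Given an instance $(G,k)$, I would construct a hypergraph $\calH=(V,E)$ with vertex set $V := \{x_i^v : i\in [k], v\in V(G)\}$ and two kinds of edges. First, ``position'' edges $e_i := \{x_i^v : v\in V(G)\}$, one per $i\in [k]$, which force any independent set to pick at most one vertex per position. Second, ``incompatibility'' edges $\{x_i^u, x_j^v\}$ for every $i\neq j$ and every pair $(u,v)$ with either $u=v$ or $uv\notin E(G)$, which forbid picking two vertices whose combination could not witness two distinct adjacent clique vertices in positions $i$ and $j$.

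From the two edge families it follows directly that an independent set of size $k$ in $\calH$ corresponds exactly to choosing one vertex per position, with all chosen vertices pairwise distinct and pairwise adjacent in $G$, i.e., to an ordering of a $k$-clique in $G$. Conversely, any $k$-clique of $G$ obviously yields such an independent set. In particular, the maximum independent set of $\calH$ has size exactly $k$ if $G$ has a $k$-clique and strictly less than $k$ otherwise.

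The essential point of the construction is that $\calH$ has generalized hypertree width at most $k$. To see this it suffices to exhibit the one-node decomposition whose single bag $\chi$ equals $V$ and whose guard set $\lambda$ is $\{e_1,\ldots,e_k\}$: the connectedness condition is trivial with a single bag, the $e_i$'s cover $V$ so $\chi\subseteq \bigcup\lambda$, and each incompatibility edge lies in $\chi$, so property $2$ of the definition is also satisfied. This decomposition has width $k$.

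Since $\calH$ can be constructed in time polynomial in $|V(G)|$ and $k$, and both its generalized hypertree width and the threshold $k$ we ask about are bounded by the parameter of the $p$-$\mathrm{Clique}$ instance, an \FPT{} algorithm for maximum independent set parameterized by generalized hypertree width would yield an \FPT{} algorithm for $p$-$\mathrm{Clique}$. I do not foresee any serious obstacle in this approach; the only delicate point is to package the clique constraints into edges of the hypergraph in such a way that a small-width decomposition is still available, and the star-shaped decomposition above is the cleanest way to arrange this.
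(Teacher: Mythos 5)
Your construction is essentially the paper's: a $k\times|V(G)|$ grid of copies, the $k$ ``column'' hyperedges $e_i$ serving as the guard of a single-bag decomposition of width $k$, and pairwise incompatibility edges encoding the graph constraints. The only (inessential) difference is that you reduce from $p$-$\mathrm{Clique}$ while the paper reduces from $p$-$\mathrm{IndependentSet}$ -- the two constructions coincide up to complementing $G$, and the paper additionally packages the ``no repeated vertex'' constraint into size-$k$ hyperedges $H_v$ where you use size-$2$ edges, which changes nothing.
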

\begin{proof}
 We will show a reduction from $p$-$\mathrm{IndependentSet}$ which is the following problem: Given a graph $G$ and an integer $k$ which is the parameter, decide if $G$ has an independent set of size $k$. Because $p$-$\mathrm{IndependentSet}$ is well known to be $\W{1}$-hard, this suffices to establish $\W{1}$-hardness of independent set parameterized by hypertree width.
 
 So let $G=(V,E)$ be a graph and let $k$ be a positive integer. We construct a hypergraph $\calH=(V', E')$ in the following way: For each vertex $v$ the hypergraph $\calH$ has $k$ vertices $v_1, \ldots, v_k$. For $i=1,\ldots, k$ we have an edge $V_i:= \{v_i \mid v\in V\}$ in $E'$. Furthermore, for each $v\in V$  we add an edge $H_v:=\{v_i \mid i\in [k]\}$. Finally we add the edge sets $E_{ij}:=\{v_i u_j\mid uv\in E\}$ for $i,j\in [k]$. $\calH$ has no other vertices or edges. The construction is illustrated in Figure \ref{fig:hardness}.
 
 \begin{figure*}[t]
 \begin{center}
 \includegraphics[scale=0.60]{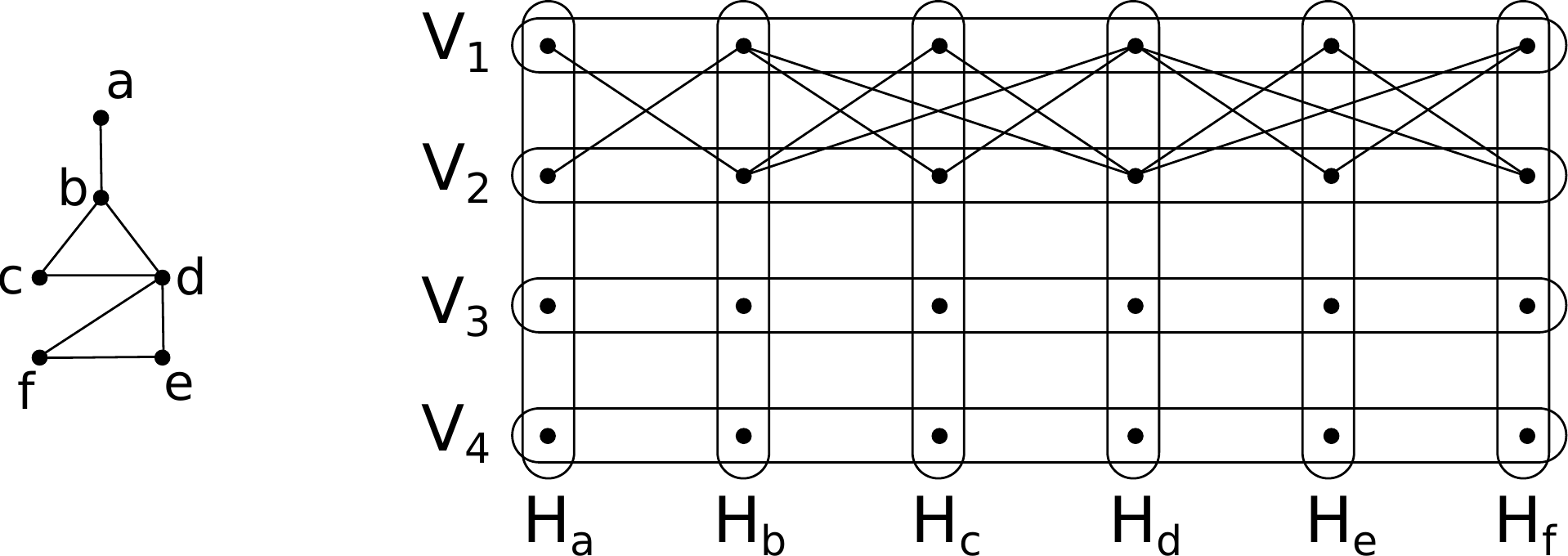}
 \caption{We illustrate the construction for Lemma \ref{lem:starsizehard} by an example. A graph $G$ on the left with the associated hypergraph $\calH$ for $k=4$ on the right. To keep the illustration more transparent the edge sets $E_{ij}$ are not shown except for $E_{1,2}$ and $E_{2,1}$.}\label{fig:hardness}
 \end{center}
 \end{figure*}
 
 We claim that $G$ has an independent set of size $k$ if and only if $\calH$ has an independent set of size $k$. Indeed, if $G$ has an independent set $v^1,\ldots , v^k$, then $v_1^1, \ldots v_k^k$ is easily seen to be an independent set of size $k$ in $\calH$. Now assume that $\calH$ has an independent set $I$ of size $k$. Then for each $v\in I$ we can choose a vertex $\pi(v)\in V$ such that $v\in H_{\pi(v)}$. Furthermore for distinct $v,u\in I$ the corresponding vertices $\pi(v),\pi(u)$ have to be distinct, too, so $\pi(I)\subseteq V$ has size $k$. Finally, we claim that $\pi(I)$ is independent in $G$. Assume this is not true, then there are vertices $\pi(v), \pi(u)$ such that $\pi(v) \pi(u)\in E$. But then $vu\in E'$ by construction which is a contradiction. So, indeed $G$ has an independent set of size $k$ if and only if $\calH$ has one.
 
 We now show that $\calH$ has generalized hypertree width at most $k$ by constructing a generalized hypertree decomposition $(\calT, (\lambda_t)_{t\in T}, (\chi_t)_{t\in T})$ of $\calH$ of width $k$. The tree $\calT$ only consists of one single vertex $v$, the block of $v$ is $\chi_v:= V'$ and the guard is $\lambda_t := \{V_1, \ldots , V_k\}$. It is easily seen that this is indeed a hypertree decomposition of width $k$.
 
 Observing that the construction of $\calH$ from $G$ can be done in time polynomial in $|V|$ and $k$ completes the proof.
\end{proof}

\subsection{Approximation}

We have seen that computing maximum independent sets of hypergraphs with decompositions of width $k$ can be done in polynomial time for fixed width $k$ and that for some decompositions it is even fixed parameter tractable with respect to $k$. Still, the exponential influence of $k$ is troubling. In this section we will show that we can get rid of it if we are willing to sacrifice the optimality of the solution. We give a $k$-approximation algorithm for computing  maximum independent sets of graphs with generalized hypertree width~$k$ assuming that a decomposition is given. We start by formulating a lemma.

\begin{lemma}\label{lem:approx}
Let $\calH$ be a hypergraph with a generalized hypertree decomposition $\varXi = (\calT, (\lambda_t)_{t\in T}, (\chi_t)_{t\in T})$ of width~$k$. Let $\calH'=(V,E')$ where $E':= \{\chi_t\mid t\in T\}$. Let $\ell$ be the size of a maximum independent set in $\calH$ and let $\ell'$ be the size of a maximum independent set in $\calH'$. Then
\[\frac{\ell}{k}\le \ell' \le \ell.\]
\end{lemma}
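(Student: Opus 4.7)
I will prove the two inequalities separately; the upper bound is immediate and the lower bound is an application of a standard coloring argument driven by the generalized hypertree decomposition.

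For the upper bound $\ell' \le \ell$, I observe that every hyperedge $e \in E$ of $\calH$ is contained in some bag $\chi_t$ by property~(2) of the decomposition. Hence if two vertices share an edge of $\calH$ they also share the edge $\chi_t$ of $\calH'$, so any $\calH'$-independent set is $\calH$-independent.

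For the lower bound $\ell/k \le \ell'$, let $I$ be a maximum independent set of $\calH$. The central observation is that $|I \cap \chi_t| \le k$ for every $t \in T$: since $\chi_t \subseteq \bigcup \lambda_t$ and $I$ is $\calH$-independent, each of the at most $k$ edges in $\lambda_t$ meets $I$ in at most one vertex, giving $|I \cap \chi_t| \le |\lambda_t| \le k$. Form the auxiliary graph $G$ on vertex set $I$ in which two vertices are adjacent whenever they co-occur in some bag $\chi_t$. By construction, a subset of $I$ is $\calH'$-independent exactly when it is $G$-independent.

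I then claim that $(\calT, (\chi_t \cap I)_{t \in T})$ is a tree decomposition of $G$ of width at most $k-1$: the edge-covering and connectivity conditions descend directly from $\varXi$, and each bag has at most $k$ vertices by the observation above. Since any graph of treewidth $w$ is $(w+1)$-colorable, $G$ admits a proper $k$-coloring, and a largest color class yields a $G$-independent set of size at least $\ell / k$; equivalently, an $\calH'$-independent set of that size. (Vertices of $I$ that are isolated in $\calH$ are automatically $\calH'$-independent from everything and may be added in afterwards without decreasing the ratio, since $k \ge 1$.) The only delicate step is verifying the tree-decomposition axioms for $G$, but these are literal restrictions of the corresponding properties of $\varXi$ to $I$, so no new combinatorial work is required.
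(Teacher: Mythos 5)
Your proof is correct and follows essentially the same route as the paper's: the upper bound via the containment of $\calH$-edges in bags, and the lower bound by showing the restriction of the primal graph of $\calH'$ to a maximum $\calH$-independent set $I$ has treewidth at most $k-1$. The only cosmetic difference is in the last step: the paper extracts an independent set of size $\ge |I|/k$ by repeatedly deleting a minimum-degree vertex and its neighbors (using $(k-1)$-degeneracy), while you invoke the $(w+1)$-colorability of treewidth-$w$ graphs and take a largest color class — two standard and interchangeable arguments yielding the same bound.
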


Before we prove Lemma \ref{lem:approx} we will show how to get the approximation algorithm from it.

\begin{observation}\label{obs:indset}
 Every independent set of $\calH'$ is also an independent set of $\calH$.
\end{observation}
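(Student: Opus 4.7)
The plan is to prove this by contrapositive: I will show that any two vertices adjacent in $\calH$ are also adjacent in $\calH'$, so any independent set in $\calH'$ must also be independent in $\calH$. Concretely, suppose $I\subseteq V$ is not independent in $\calH$; then there exist distinct $u,v \in I$ together with some $e \in E$ such that $\{u,v\}\subseteq e$.

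The key step is to invoke property~2 of a generalized hypertree decomposition: every hyperedge $e\in E$ is contained in some bag, i.e.\ there exists $t\in T$ with $e\subseteq \chi_t$. Thus $\{u,v\}\subseteq \chi_t$, and since $\chi_t \in E'$ by definition of $\calH'$, the vertices $u$ and $v$ share an edge of $\calH'$, contradicting independence of $I$ in $\calH'$. There is no real obstacle here; this observation is essentially an immediate consequence of the covering property of decompositions, and the argument involves nothing beyond unwinding the definitions.
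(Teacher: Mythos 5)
Your proof is correct and uses exactly the same key fact as the paper (property~2 of a generalized hypertree decomposition: every hyperedge is contained in some bag); the only difference is cosmetic, since you phrase it as a contrapositive while the paper argues directly about independent pairs.
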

\begin{proof}
 Each pair of independent vertices $x,y$ in $\calH'$ is by definition only in different blocks $\chi_t$ in $\calH$. For each edge $e\in E$ there must by definition of generalized hypertree decompositions be a block $\chi$ such than $e\subseteq \chi$. Thus no edge $e\in E$ can contain both $x$ and $y$, so $x$ and $y$ are independent in $\calH$, too. 
\end{proof}

\begin{corollary}
 There is a polynomial time algorithm that given a hypergraph $\calH$ and a generalized hypertree decomposition of width $k$ computes an independent set of size~$\ell$ of $\calH$ such that $|I|\ge \frac{\ell}{k}$ where $\ell$ is the size of a maximum independent set of $\calH$. 
\end{corollary}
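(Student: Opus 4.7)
The plan is to compute a maximum independent set of the auxiliary hypergraph $\calH'$ from Lemma~\ref{lem:approx} and return it as the approximate solution for $\calH$. By Observation~\ref{obs:indset}, any independent set of $\calH'$ is automatically independent in $\calH$, so the only thing to verify is that the maximum independent set of $\calH'$ can be computed in polynomial time, after which the lower bound in Lemma~\ref{lem:approx} will give the $k$-approximation guarantee for free.

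The key step is the observation that $\calH'$ is acyclic. Indeed, the edges of $\calH'$ are precisely the blocks $\chi_t$ of the given decomposition, and the tree $\calT$ together with this labelling is a join tree of $\calH'$: the connectedness condition of the original generalized hypertree decomposition states exactly that, for every vertex $v\in V$, the set $\{t\in T \mid v\in \chi_t\}$ induces a subtree of $\calT$, which is the defining property of a join tree. Hence $\calH'$ has generalized hypertree width $1$ and is acyclic in the sense of our Section~\ref{sct:hypergraphs}.

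With $\calH'$ acyclic, Lemma~\ref{lem:edgecover} provides a polynomial-time algorithm that computes a maximum independent set $I$ of $\calH'$. Setting $|I|=\ell'$, Observation~\ref{obs:indset} yields that $I$ is independent in $\calH$ as well, and Lemma~\ref{lem:approx} gives $|I| = \ell' \ge \ell/k$. Building $\calH'$ from the given decomposition is trivially polynomial, so the overall algorithm runs in polynomial time and achieves approximation ratio $k$.

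I do not foresee a substantive obstacle: the whole argument is a short composition of the two lemmas already stated, and the only real content is the (straightforward) verification that the block-hypergraph $\calH'$ inherits acyclicity from the tree structure of $\varXi$.
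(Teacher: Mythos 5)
Your proof is correct and follows essentially the same route as the paper's: build $\calH'$ as in Lemma~\ref{lem:approx}, note it is acyclic (you spell out the join-tree argument, which the paper leaves implicit), compute a maximum independent set via Lemma~\ref{lem:edgecover}, and then combine Observation~\ref{obs:indset} with Lemma~\ref{lem:approx} for the approximation guarantee.
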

\begin{proof}
Observe that $\calH'$ is acyclic. With Lemma \ref{lem:edgecover} we compute a maximum independent set $I$ of $\calH'$ whose size by Lemma \ref{lem:approx} only differs by a factor $\frac{1}{k}$ from $\ell$. By Observation~\ref{obs:indset} we have that $I$ is also an independent set of~$\calH$.
\end{proof}

\begin{proofof}{Lemma \ref{lem:approx}}
The second inequality follows directly from Observation \ref{obs:indset}.
 
For the first inequality consider a maximum independent set $I$ of $\calH$. Observe that a set $I'$ is an independent set of $\calH'$ if and only if it is an independent set of its primal graph~$\calH'_P$, so it suffices to show the same result for $\calH'_P$.
 
\begin{clm}\label{clm:treewidht}
$\calH'_P[I]$ has treewidth at most $k-1$.
\end{clm}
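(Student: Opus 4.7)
The plan is to produce a tree decomposition of $\calH'_P[I]$ directly from $\varXi$ by restricting each block to $I$. Concretely, I would set $\chi'_t := \chi_t \cap I$ for every $t \in T$ and keep the same tree $\calT$. The proof then splits into two tasks: (i) verifying that $(\calT, (\chi'_t)_{t \in T})$ is a valid tree decomposition of $\calH'_P[I]$, and (ii) bounding each $|\chi'_t|$ by $k$.

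For (i), the vertex coverage and connectedness conditions are inherited straight from $\varXi$: any $v \in I$ lies in some $\chi_t$, hence in $\chi'_t$, and for $v \in I$ we have $\{t \mid v \in \chi'_t\} = \{t \mid v \in \chi_t\}$, which induces a subtree of $\calT$ by the connectedness property of $\varXi$. The only nontrivial condition is edge coverage. Here I use the definition of $\calH'$: its hyperedges are exactly the blocks $\chi_t$, so any edge $\{u,v\}$ of the primal graph $\calH'_P$ corresponds to two vertices co-occurring in some $\chi_t$. Restricting to $I$, every edge of $\calH'_P[I]$ is therefore contained in the corresponding $\chi'_t$.

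For (ii), I would invoke property~3 of generalized hypertree decompositions, namely $\chi_t \subseteq \bigcup \lambda_t$ with $|\lambda_t| \le k$, together with the fact that $I$ is an independent set of $\calH$. The latter forces $|e \cap I| \le 1$ for every $e \in \lambda_t \subseteq E$, so
\[
|\chi'_t| \;=\; |\chi_t \cap I| \;\le\; \sum_{e \in \lambda_t} |e \cap I| \;\le\; |\lambda_t| \;\le\; k.
\]
Hence every bag of the resulting tree decomposition has at most $k$ vertices, i.e.\ width at most $k-1$, establishing the claim.

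There is essentially no hard obstacle: the entire argument is bookkeeping once one notices that the guards $\lambda_t$ allow us to simultaneously control $\chi_t$ and the intersection $\chi_t \cap I$ through the independence of $I$ in $\calH$. The only thing to be careful about is not confusing the hyperedges of $\calH$ (which constrain $I$) with those of $\calH'$ (whose primal graph we are decomposing); keeping these two roles separate is the main conceptual point of the claim and of Lemma~\ref{lem:approx} as a whole, since the subsequent deduction of $\ell/k \le \ell'$ will follow by applying the standard fact that a graph of treewidth $k-1$ is $k$-colorable and hence contains an independent set of size at least $|I|/k$, which by Observation~\ref{obs:indset} transfers back to $\calH'$.
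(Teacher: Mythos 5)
Your proof is correct and follows essentially the same route as the paper: both restrict the blocks of $\varXi$ to $I$, inherit connectedness and edge coverage, and bound $|\chi_t \cap I| \le |\lambda_t| \le k$ using the fact that independence of $I$ in $\calH$ forces $|e \cap I| \le 1$ for every $e \in \lambda_t$. The only cosmetic difference is that the paper explicitly forms the restricted guard $\lambda'_t = \{e\cap I \mid e\in\lambda_t,\ e\cap I\neq\emptyset\}$ before counting, whereas you bound the bag size directly by the sum $\sum_{e\in\lambda_t}|e\cap I|$ -- the same argument.
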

\begin{proof}
 We construct a tree decomposition from $\varXi$. To do so consider $\varXi[I]$ which for each guarded block $(\lambda, \chi)$ of~$\varXi$ contains $(\lambda', \chi')$ where $\lambda' := \{ e\cap I \mid e\in \lambda, e\cap I \ne \emptyset \}$ and $\chi':= \chi \cap I$. The set $I$ is independent, so each guard of $\varXi'[I]$ is a set of singletons and if follows $|\chi'|\leq |\lambda'|$ for each guarded block $(\lambda', \chi')$.

Let $\calT[I]$ be the tree of $\varXi[I]$ induced by $\calT$ in the obvious way. Then the blocks $\chi' = \chi\cap I$ still fulfill the connectedness condition. Furthermore, for each edge $uv$ in $\calH'[I]$ there is a guarded block $(\lambda', \chi')$ such that $u,v \in \chi'$. Thus $\varXi[I]$ is a tree decomposition of $\calH'_P[I]$. But we have that $|\chi'| \le |\lambda'| \le |\lambda| \le k$ and thus the tree decomposition is of width at most $k-1$.
\end{proof}

\begin{clm}
 $\calH'_P[I]$ has an independent set $I'$ of size at least~$\frac{|I|}{k}$.
\end{clm}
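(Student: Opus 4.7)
The plan is to combine Claim \ref{clm:treewidht} with the standard chromatic bound for bounded-treewidth graphs. Since $\calH'_P[I]$ has treewidth at most $k-1$, it is $(k-1)$-degenerate: picking any tree decomposition of width at most $k-1$, a leaf bag contains a vertex all of whose neighbors lie in the bag, hence of degree at most $k-1$. Removing that vertex and the corresponding singleton bag (or updating the leaf) yields a graph that still admits a tree decomposition of width at most $k-1$, so the degeneracy property propagates by induction on $|I|$.

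From $(k-1)$-degeneracy one derives a proper $k$-coloring by the usual greedy scheme: order the vertices $v_1, \ldots, v_{|I|}$ so that each $v_j$ has at most $k-1$ neighbors among $v_{j+1}, \ldots, v_{|I|}$, then color them in reverse, assigning to each $v_j$ the smallest color in $\{1,\ldots,k\}$ not used by its already-colored neighbors. Partitioning $I$ into the $k$ resulting color classes and applying the pigeonhole principle yields a color class $I'$ of size at least $|I|/k$. Being a color class of a proper coloring, $I'$ is an independent set in $\calH'_P[I]$, and therefore also in $\calH'_P$, which establishes the claim.

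There is no real obstacle: both the degeneracy of bounded-treewidth graphs and the corresponding greedy coloring are well-known. The only care needed is to check that restricting the tree decomposition to the remaining vertex set preserves width, which is immediate from the definition. Once $I'$ is obtained we conclude $\ell' \ge |I'| \ge |I|/k = \ell/k$, finishing the proof of the first inequality in Lemma \ref{lem:approx}.
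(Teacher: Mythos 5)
Your proof is correct and rests on the same key ingredient as the paper's: bounded treewidth forces bounded degeneracy. The two arguments diverge only in how they extract a large independent set from a $(k-1)$-degenerate graph. You run the standard greedy $k$-coloring along a degeneracy ordering and invoke pigeonhole on the color classes, which cleanly gives a class of size at least $|I|/k$. The paper instead deletes a minimum-degree vertex together with its at most $k-1$ neighbors, repeatedly; since each round removes at most $k$ vertices, this also yields at least $|I|/k$ selections, and the selected vertices are independent by construction. Both routes are classical and give the identical bound; the coloring argument has the mild advantage of producing the whole partition into $k$ independent sets at once, while the deletion argument directly hands you a single independent set without needing the chromatic-number detour. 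Either way, combined with Claim~\ref{clm:treewidht} and Observation~\ref{obs:indset}, the first inequality $\ell/k \le \ell'$ of Lemma~\ref{lem:approx} follows.
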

\begin{proof}
From Claim \ref{clm:treewidht} it follows that $\calH'[I]$ and all of its subgraphs have a vertex of degree at most $k$ (see e.g.\ \cite[p.~265]{FlumGrohe06}). We construct $I'$ iteratively by choosing a vertex of minimum degree and deleting it and its neighbors from the graph. In each round we delete at most $k$ vertices, so we can choose a vertex in at least $\frac{|I|}{k}$ rounds. Obviously the chosen vertices are independent.
\end{proof}

Every independent set of $\calH_P[I]$ is also an independent set of $\calH_P$ which completes the proof of Lemma \ref{lem:approx}.
\end{proofof}

\section{Fractional Hypertree width}\label{sct:fractionalstatements}

In this section we extend the main results of the paper to fractional hypertree width, which is the most general  notion known so far that leads to tractable Boolean $\CQ$ \cite{GroheMarx06}. In particular it is strictly more general than generalized hypertree width.

\begin{definition}
 Let $\calH = (V,E)$ be a hypergraph. A \emph{fractional edge cover} of a vertex set $S\subseteq V$ is a mapping $\psi: E\rightarrow [0,1]$ such that for every $v\in S$ we have $\sum_{e\in E: v\in e} \psi(e) \ge 1$. The weight of $\psi$ is $\sum_{e\in E} \psi(e)$. The fractional edge cover number of $S$, denoted by $\rho_\calH^*(S)$ is the minimum weight taken over all fractional edge covers of $S$.
 
 A \emph{fractional hypertree decomposition} of $\calH$ is a triple $(\calT, (\chi_t)_{t\in T}, (\psi_t)_{t\in T})$ where $\calT = (T,F)$ is a tree, and $\chi_t \subseteq V$ and $\psi_t$ is a fractional edge cover of $\chi_t$ for every $t\in T$  satisfying the following properties:
 \begin{enumerate}
  \item For every $v\in V$ the set $\{t\in T \mid v\in \chi_t\}$ induces a subtree of $\calT$.
  \item For every $e\in E$ there is a $t\in T$ such that $e\subseteq \chi_t$.
 \end{enumerate}
The width of a fractional hypertree decomposition $(\calT, (\chi_t)_{t\in T},$ $ (\psi_t)_{t_in T})$ is $\max_{t\in T}(\rho_\calH^*(\chi_t))$. The \emph{fractional hypertree width} of $\calH$ is the minimum width over all fractional hypertree decompositions of $\calH$.
\end{definition}

Together with the previous results of this paper, the two following ones will serve as key ingredients to prove the main results of this section.

\begin{theorem}[\cite{GroheMarx06}]\label{thm:enumerate}
 The solutions of a $\CQ$ instance $\Phi$ with hypergraph $\calH$ can be enumerated in time $\|\Phi\|^{\rho^*(\calH) + O(1)}$.
\end{theorem}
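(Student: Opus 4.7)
The plan is to combine a combinatorial bound on the number of answers with a matching enumeration algorithm, both guided by an optimal fractional edge cover. First I would solve the linear program defining $\rho^*(\calH)$ to obtain a fractional edge cover $\psi\colon E\to [0,1]$ of $V$ of minimum weight $\rho^*(\calH)$; this takes polynomial time in $\|\Phi\|$ and serves as the blueprint for both the size bound and the algorithm.

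The size bound is the AGM inequality, which I would prove via Shearer's entropy lemma. Let $\tu a$ be a uniformly distributed answer of $\phi$ on $\calA$ and let $H$ denote Shannon entropy. For every atom with variable set $e$ the projection $\tu a_{|e}$ is a tuple in the corresponding relation, so $H(\tu a_{|e}) \le \log |R_e|$ where $|R_e|$ is the size of that relation. Because $\psi$ covers every vertex, Shearer's inequality gives
\[
H(\tu a) \le \sum_{e\in E} \psi(e)\, H(\tu a_{|e}) \le \sum_{e\in E} \psi(e)\log|R_e|,
\]
hence $|\phi(\calA)| \le \prod_{e\in E} |R_e|^{\psi(e)} \le \|\Phi\|^{\rho^*(\calH)}$, which already establishes the right order for the \emph{output} size.

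For the algorithm itself I would proceed by recursion on $|V|$: pick a variable $x$, compute its active domain $D_x$ by a semi-join reduction against the atoms containing $x$, and for each $d\in D_x$ restrict every relation to tuples whose $x$-component equals $d$, then recurse on the smaller instance associated with $\calH[V\setminus\{x\}]$. Deleting a variable does not increase the fractional cover number, so the AGM bound propagates down the recursion tree and whenever a leaf is reached it produces an answer to output. The main obstacle is charging the algorithmic work to the output so that the overall running time stays within $\|\Phi\|^{\rho^*(\calH)+O(1)}$: a naive branching loses factors exponential in $|V|$. The Grohe--Marx approach to this obstacle is to drive the recursion along a fractional hypertree decomposition of width $\rho^*(\calH)$, so that at each bag one materialises a local join whose size is bounded, via the AGM bound applied locally, by $\|\Phi\|^{\rho^*(\calH)}$; propagating partial answers up the tree in the Yannakakis-style manner then yields the total time bound, with only a polynomial overhead in $\|\Phi\|$ arising from the cost of the semi-joins and of solving the covering LP.
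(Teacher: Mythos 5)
This statement is cited from Grohe and Marx and is not reproved in the paper, so there is no in-text proof to compare against; I can only assess your sketch on its own merits. The first half is solid: solving the covering LP, then deriving the output-size bound $|\phi(\calA)|\le \prod_{e}|R_e|^{\psi(e)}\le\|\Phi\|^{\rho^*(\calH)}$ via Shearer's lemma is exactly the AGM argument, and the observation that induced subhypergraphs do not increase the fractional cover number is the right structural fact to keep in hand.

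The algorithmic half, however, has a genuine gap, and the proposed fix is circular. You correctly worry that the top-down branching recursion (branch on a value of $x$, recurse on $\calH[V\setminus\{x\}]$) can produce a recursion tree that is exponentially larger than the output, since nothing caps the number of branches that eventually die. But the remedy is \emph{not} to run the recursion along a fractional hypertree decomposition: the theorem is about $\rho^*(\calH)$, the fractional cover number of the entire vertex set, not about $\fhw(\calH)$, and in Grohe--Marx the bounded-$\fhw$ algorithm is built \emph{on top of} this theorem, using it to materialise the join at each bag. Invoking that algorithm here presupposes the very claim you are proving (with the trivial one-bag decomposition it literally asks you to ``materialise the join whose size is bounded by $\|\Phi\|^{\rho^*}$'', which is the task at hand). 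The correct algorithm is bottom-up forward chaining over an arbitrary variable order $v_1,\dots,v_n$: maintain $L_i$, the set of partial assignments to $\{v_1,\dots,v_i\}$ that satisfy every constraint projected onto that prefix; obtain $L_{i+1}$ from $L_i$ by trying each domain value for $v_{i+1}$ and filtering. The point you already half-noticed then does the work: $L_i$ is precisely the answer set of the induced instance on $\{v_1,\dots,v_i\}$ whose relations are projections of the $R_e$, the restriction of $\psi$ still covers the prefix with weight $\rho^*(\calH)$, and the projected relations are no larger, so the AGM bound gives $|L_i|\le\|\Phi\|^{\rho^*(\calH)}$ at every step. Each step then costs at most $|L_i|\cdot|D|$ constraint checks, giving the stated $\|\Phi\|^{\rho^*(\calH)+O(1)}$ total. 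Replace the final paragraph with this materialise-and-extend argument; the rest of your sketch stands.
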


\begin{theorem}[\cite{Marx10}]\label{thm:approxFHW} Given a hypergraph $\calH$ and a rational number $w\ge 1$, it is possible in time $\|\calH\|^{O(w^3)}$ to either
 \begin{itemize}
  \item compute a fractional hypertree decomposition of $\calH$ with width at mots $7w^3+31w+7$, or
  \item correctly conclude that $\fhw(\calH)\ge w$.
 \end{itemize}
\end{theorem}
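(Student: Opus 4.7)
The plan is to prove Theorem~\ref{thm:approxFHW} via the standard balanced-separator paradigm for constructing tree decompositions, adapted to the fractional setting. First I would establish a game-theoretic equivalence: up to polynomial factors, $\fhw(\calH)\le w$ is equivalent to the statement that for every vertex-weighting on $V(\calH)$ there exists a vertex set $X$ with fractional edge cover number at most $w$ whose removal breaks $\calH$ into components each carrying at most half the weight. The easy direction (decomposition implies balanced fractional separator) follows from picking a bag of the decomposition that splits the weight balanced along its attached subtrees; the converse direction is what drives the recursion.

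Next, I would reduce the task of finding a balanced fractional separator of small fractional cover number to a linear program. For a given weighting on $V$, I would set up an LP with variables $\psi(e)$ for $e\in E$ and indicator variables $x_v$ for $v\in V$, together with the constraints $\sum_{e\ni v}\psi(e)\ge x_v$, a balance inequality tying the $x_v$ to component weights, and minimize $\sum_e\psi(e)$. Solving the LP produces a fractional separator of cost at most the true optimum; one then rounds to an integral $X$ by threshold rounding on the $x_v$ values and extracts an integral edge cover of $X$. The rounding is where the approximation loss first appears: a naive combination of the separator-balance rounding and the fractional-to-integral cover rounding yields at most a polynomial-in-$w$ loss, giving a separator bag of width $O(w^3)$.

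Having found a balanced separator bag of width $O(w^3)$, I would build the decomposition recursively on each component, attaching the boundary to both sides so that the connectedness condition is preserved by the usual bookkeeping. The logarithmic depth of the recursion does not compound the rounding loss because each bag's width is bounded only by the separator rounded in its own subproblem. If at some recursion level the separator LP has value strictly exceeding $w$, one concludes by contrapositive via the game-theoretic characterization that $\fhw(\calH)\ge w$, which is exactly the second output of the theorem. The runtime $\|\calH\|^{O(w^3)}$ comes from enumerating over the possible supports of bounded-weight fractional separators at each node of the recursion tree.

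The main obstacle will be pinning down the explicit constants in $7w^3+31w+7$: this requires careful accounting of the cumulative rounding loss across three layers (the separator balance, the LP-based cover of the separator set, and the fractional-to-integral cover extraction), together with a sharp analysis of the auxiliary quantities in the game characterization. A looser analysis easily yields only a qualitative $\|\calH\|^{\mathrm{poly}(w)}$ bound with a larger exponent; tightening it to a cubic dependency is the combinatorially delicate part.
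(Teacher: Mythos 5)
This theorem is not proved in the paper at all --- it is quoted verbatim from Marx's \emph{Approximating fractional hypertree width} (\cite{Marx10}) and used as a black box, so there is no in-paper proof to compare your sketch against. That said, your sketch contains a gap serious enough to be worth flagging: the central algorithmic step you propose does not work as described.

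You suggest finding a balanced separator of small fractional edge cover number by solving a linear program with cover variables $\psi(e)$, indicator variables $x_v$, the covering constraints $\sum_{e\ni v}\psi(e)\ge x_v$, and ``a balance inequality tying the $x_v$ to component weights.'' The balance condition is not expressible as a linear (or even convex) constraint over the $x_v$: the connected components of $\calH$ after removing $X=\{v : x_v=1\}$ are a highly nonlinear, combinatorial function of the $x_v$, and there is no straightforward LP whose feasible region encodes ``no component retains more than half the weight.'' This is precisely why balanced-separator problems are hard and why Marx's actual argument takes a very different algorithmic route --- roughly, a structural lemma about fractionally highly-connected sets combined with an explicit enumeration of candidate (sub)separators of bounded size, which is where the exponent $O(w^3)$ in the running time truly comes from, rather than from ``enumerating supports of bounded-weight fractional separators'' (a fractional cover of weight $w$ can have support as large as $|E|$, so that enumeration is not polynomially bounded either). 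Your game-theoretic equivalence is also stated too strongly: the existence of $\rho^*$-small balanced separators for every weighting only yields a decomposition after a polynomial-in-$w$ loss, and making that implication algorithmic is the core difficulty, not a black box. Finally, the hope that threshold rounding on the LP's $x_v$ values recovers an integral balanced separator with only polynomial loss is exactly the kind of claim that, for balanced separators, is known to fail without substantial additional machinery. In short, the high-level ``balanced separator $\Rightarrow$ recursive decomposition'' frame is right, but the LP-and-rounding mechanism you slot into it is not viable, and the real proof replaces it with a considerably more delicate combinatorial and enumerative argument.
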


\subsection{Tractable counting}

We start of with the quantifier free case which we will use as a building block for the more general result later.

\begin{lemma}\label{lem:fractQF}
 The solutions of a quantifier free $\CQ$ instance $\Phi$ with hypergraph $\calH$ can be counted in time $\|\Phi\|^{\fhw(\calH)^{O(1)}}$.
\end{lemma}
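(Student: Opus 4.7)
The plan is to reduce counting a quantifier-free $\CQ$ of bounded fractional hypertree width to counting in an acyclic quantifier-free $\CQ$, by materialising every bag of an approximate fractional hypertree decomposition. First I would apply Theorem~\ref{thm:approxFHW} to $\calH$ with parameter $w=\fhw(\calH)+1$ to obtain in polynomial time a fractional hypertree decomposition $(\calT, (\chi_t)_{t\in T}, (\psi_t)_{t\in T})$ of width $w'=O(\fhw(\calH)^3)$.

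Next, for each $t\in T$ I would materialise a relation $R_t$ over the variables $\chi_t$ that captures every constraint of $\phi$ touching $\chi_t$. Concretely, form an auxiliary $\CQ$-instance $\Phi_t$ whose atoms are, for every $\alpha\in\atom{\phi}$ with $\var{\alpha}\cap \chi_t\ne\emptyset$, one atom on the variables $\var{\alpha}\cap\chi_t$ interpreted by $\pi_{\var{\alpha}\cap\chi_t}(\alpha(\calA))$. The hypergraph $\calH_t$ of $\Phi_t$ has edges $\{e\cap\chi_t\mid e\in E,\, e\cap\chi_t\ne\emptyset\}$, and transferring each weight $\psi_t(e)$ to the edge $e\cap\chi_t$ turns $\psi_t$ into a fractional edge cover of $\chi_t$ in $\calH_t$ of total weight at most $w'$, so $\rho^*(\calH_t)\le w'$. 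Theorem~\ref{thm:enumerate} therefore lets me enumerate the solution set of $\Phi_t$, call it $R_t$, in time $\|\Phi\|^{w'+O(1)}$. I then build the quantifier-free instance $\Phi'=(\calA',\phi')$ with one fresh atomic predicate $P_t$ of arity $|\chi_t|$ per bag, interpreted by $R_t$. Its hypergraph has edge set $\{\chi_t\mid t\in T\}$ with $\calT$ as join tree, so $\phi'$ is acyclic; and by property~2 of the decomposition every atom of $\phi$ sits inside some bag and is therefore enforced exactly by the corresponding $R_t$, which makes $\Phi$ and $\Phi'$ solution equivalent.

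Finally, the solutions of $\Phi'$ can be counted by standard bottom-up dynamic programming along $\calT$: root $\calT$ arbitrarily, and for every $t$ and every $\tau\in R_t$ set
\[N_t(\tau):=\prod_{c \text{ child of } t}\ \sum_{\tau'\in R_c,\ \tau'\sim\tau}N_c(\tau'),\]
where $\tau'\sim\tau$ means agreement on $\chi_c\cap\chi_t$; the answer is $\sum_{\tau\in R_r}N_r(\tau)$ for the root $r$. The connectedness condition guarantees that the variables appearing strictly below distinct children of $t$ are disjoint, so the product really counts each extension once. This DP runs in time polynomial in $\max_t|R_t|\le \|\Phi\|^{O(w')}$, giving a total running time of $\|\Phi\|^{O(w')}=\|\Phi\|^{\fhw(\calH)^{O(1)}}$. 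The main technical point to verify carefully is the materialisation step: one must check that the fractional cover $\psi_t$ of $\chi_t$ in $\calH$ descends to the projected hypergraph $\calH_t$ with the same weight, so that Theorem~\ref{thm:enumerate} applies with exponent $w'$ rather than with the potentially much larger $\rho^*(\calH)$; once this is in place, the acyclicity of $\Phi'$ and the correctness of the Yannakakis-style DP for acyclic quantifier-free counting are classical.
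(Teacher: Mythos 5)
Your proof is correct and follows the same route as the paper's: compute a near-optimal fractional hypertree decomposition via Theorem~\ref{thm:approxFHW}, materialise each bag via Theorem~\ref{thm:enumerate}, and then count on the resulting solution-equivalent acyclic quantifier-free instance. You make explicit two points the paper leaves implicit, namely that $\psi_t$ descends (after merging and capping at $1$ the weights of coinciding projections $e\cap\chi_t$) to a fractional edge cover of the projected bag hypergraph, which keeps the exponent at $O(\fhw(\calH)^3)$ rather than $\rho^*(\calH)$, and the concrete Yannakakis-style dynamic program on the join tree.
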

\begin{proof}
 With Theorem \ref{thm:approxFHW} we can compute a fractional hypertree decomposition $(\calT, (B_t)_{t\in T}, (\psi_t)_{t\in T})$ of width at most $k:=O(\fhw(\calH)^3)$. For each bag $B_t$ we can with Theorem \ref{thm:enumerate} in time $\|\Phi\|^k$ compute all solutions to the CQ $\Phi[B_t]$ that is induced by the variables in $B_t$. Let these solutions form a new relation $\calR_t$ belonging to a new atom $\varphi_t$. Then $\bigwedge_{t\in T} \varphi_t(B_t)$ gives a solution equivalent, acyclic, quantifier free $\sCQ$ instance of size~$\|\Phi\|^{O(k)}$.
\end{proof}

We can now formulate a version of Theorem \ref{thm:guarded} for fractional hypertree width.

\begin{theorem}\label{thm:fractionalstatement}
 There is an algorithm that given a $\sCQ$-instance $\Phi$ of quantified starsize $\ell$ and fractional hypertree width $k$ counts the solutions of $\Phi$ in time $\|\Phi\|^{p(k,\ell)}$ for a polynomial $p$.
\end{theorem}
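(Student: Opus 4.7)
The plan is to mirror the proof of Theorem~\ref{thm:guarded} step by step, replacing each appeal to generalized hypertree decompositions by its fractional counterpart. The three substitutions are: use Theorem~\ref{thm:approxFHW} in place of item~(d) of the lemma listed just after Example~3 to get a decomposition; use Theorem~\ref{thm:enumerate} in place of the straightforward $\|\Phi\|^{O(k)}$ enumeration of tuples in a guarded block; and use Lemma~\ref{lem:fractQF} in place of Yannakakis for the final quantifier-free counting. Concretely, first invoke Theorem~\ref{thm:approxFHW} with $w:=k$ to compute, in time $\|\Phi\|^{O(k^3)}$, a fractional hypertree decomposition $\varXi=(\calT,(\chi_t)_{t\in T},(\psi_t)_{t\in T})$ of $\calH$ of width $k':=7k^3+31k+7=O(k^3)$.

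Next, partition the hypergraph into its $S$-components $\calH[V_i']$ with associated sub-instances $\Phi_i=(\calA_i,\phi_i)$ exactly as in Theorem~\ref{thm:guarded}. The restricted decomposition with bags $\chi_t\cap V_i'$ and fractional covers obtained by summing $\psi_t$ over the preimages of each restricted edge is again a fractional hypertree decomposition of $\calH[V_i']$ of width at most $k'$, which is the fractional analogue of Observation~\ref{obs:subhypergraphwidth}. For each bag $B=\chi_t\cap V_i'$, enumerate via Theorem~\ref{thm:enumerate} all satisfying assignments to the sub-query of $\phi_i$ containing exactly the atoms whose variables sit in $B$, and package the result as a fresh atomic relation $\varphi_{t,i}$ on the variables of $B$. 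The conjunction $\phi_i'$ of the $\varphi_{t,i}$ is solution-equivalent to $\phi_i$ and acyclic, with its hypergraph $\calH_i$ having the bags of the restricted decomposition as edges; the same connectedness argument as in Theorem~\ref{thm:guarded} shows that the $S_i$-star size of $\calH_i$ is still at most $\ell$.

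Now apply Lemma~\ref{lem:edgecover} to $\calH_i$ to obtain at most $\ell$ atoms $\alpha_1,\dots,\alpha_\ell$ among the $\varphi_{t,i}$ that cover $S_i$, each of which has at most $\|\Phi\|^{O(k')}$ tuples. Enumerate all $\|\Phi\|^{O(k'\ell)}$ compatible combinations of tuples from these atoms; for each one, fix the free variables accordingly and, using Yannakakis on the resulting acyclic quantifier-free instance $\phi_i'$, decide whether the fixed tuple extends to a satisfying assignment of the quantified variables, and if so record its projection onto $S_i$ in a new relation $\calA_i''$ for a quantifier-free atom $\phi_i''$ on $S_i$. Replacing each $\phi_i$ by $\phi_i''$ inside $\phi$ yields a quantifier-free, solution-equivalent instance $\Phi''$ whose associated hypergraph inherits a fractional hypertree decomposition of width at most $k'$ from $\varXi$ (replace each $V_i'$ in a bag by $S_i$, re-using the original $\psi_t$ on the $S_i$-atoms together with the trivial covers of singletons). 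A final call to Lemma~\ref{lem:fractQF} counts the solutions of $\Phi''$ in time $\|\Phi''\|^{(k')^{O(1)}}$, giving the claimed bound $\|\Phi\|^{p(k,\ell)}$.

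The step I expect to be the main obstacle is bookkeeping that the restriction of the fractional decomposition to each $V_i'$ (and then the final reassembly of a fractional decomposition of $\Phi''$) really keeps the width at $O(k^3)$, because fractional edge covers interact less cleanly with edge restriction than integer covers do; once this and the $S_i$-star-size preservation are in place, the rest of the argument is a routine translation of Theorem~\ref{thm:guarded}'s proof, with all exponents blown up from $k$ to $k'=O(k^3)$.
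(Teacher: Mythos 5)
Your proof follows the same route as the paper's: use Theorem~\ref{thm:approxFHW} to get a width-$k^{O(1)}$ fractional decomposition, process each $S$-component by replacing each bag with a single atom computed via Theorem~\ref{thm:enumerate}, compute an $\ell$-edge cover of $S_i$ via Lemma~\ref{lem:edgecover}, project out the quantified variables, reassemble a width-$k^{O(1)}$ fractional decomposition of the quantifier-free instance, and finish with Lemma~\ref{lem:fractQF}. The bookkeeping step you flagged as the likely obstacle resolves exactly as you'd hope and as the paper does it: restricting a fractional cover of a bag to an induced subhypergraph never increases its weight (cap the summed weights at~$1$ if multiple edges collapse), and in the reassembly the weight freed up by dropping all edges meeting a $V_i$ is at least~$1$ per component, so assigning the new atom $S_i$ weight~$1$ keeps the total at most $k'$.
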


\begin{proof} This is a minor modification of the proof of Theorem \ref{thm:guarded}. 
Let $\calH= (V,E)$ be the hypergraph of $\Phi$. Because of Theorem \ref{thm:approxFHW} we may assume that we have a fractional hypertree decomposition $\varXi:=(\calT, (\chi_t)_{t\in T}, (\psi_t)_{t\in T})$ of width $k':= k^{O(1)}$ of $\calH$ where $\calH$ is the hypergraph of $\Phi$. 
For each edge $e\in E$ we let $\varphi(e)$ be the atom of $\Phi$ that induces $e$. 

Let $V_1, \ldots, V_m$ be the vertex sets of the components of $\calH-S$ and let $V_1', \ldots, V_m'$ be the vertex sets of the $S$-components of $\calH$. Clearly, $V_i\subseteq V_i'$ and $V_i'-V_i = V_i'\cap S =: S_i$. Let $\Phi_i$ be the restriction of $\Phi$ to the variables in $V_i'$ and Let $\varXi_i$ be the corresponding fractional hypertree decomposition. Then $\varXi_i$ has a tree $\calT_i$ that is a subtree of $\calT$. 

For each $\Phi_i$ we construct  a new $\sCQ$-instance $\Phi_i'$ by computing for each bag $B_t$ a constraint $\varphi$ in the variables $B_t$ that contains the solutions of $\Phi_i[B_t]$. The decomposition $\varXi$ has width at most $k'$ so this can be done in time $n^{O(k')}$ by Theorem \ref{thm:enumerate}. Obviously $\Phi_i$ and $\Phi'_i$ are solution equivalent and $\Phi'_i$ is acyclic. Furthermore, $\Phi_i'$ has only one single $S_i$-component, because all the vertices in $V_i$ are connected in $\Phi$ and thus also in $\Phi_i'$. Let $\calH_i$ be the hypergraph of $\Phi_i'$, then $\calH_i$ has $S_i$-star size at most $\ell$. Thus the vertices in $S_i$ can be covered by at most $\ell$ edges in $\calH_i$ by Lemma \ref{lem:edgecover}. Pick $\ell$ such edges $e_1, \ldots , e_\ell$. We construct a new atom $\varphi_i$ in the variables $S_i$ that is solution equivalent to $\Phi_i'$ by doing the following: For each combination $t_1, \ldots, t_\ell$ of tuples in $\varphi(e_1), \ldots, \varphi(e_\ell)$ fix the free variables in $\Phi_i'$ to the constants prescribed by the tuples $t_1, \ldots, t_\ell$ if these do not contradict. If the resulting $\ACQ$ instance has a solution, add $t_1\bowtie \ldots \bowtie t_k$ to the relation of $\phi_i$.

We now eliminate all quantified variables in $\Phi$. To do so we add the constraint $\phi_i$ for $i\in [m]$ and delete all constraints that contain any quantified variable, i.e.\ we delete each $\Phi_i'$. Call the resulting $\sCQ$ instance $\Phi'$. Because $\varphi_i$ is solution equivalent to $\Phi_i'$, we have that $\Phi$ and $\Phi'$ are solution equivalent, too. 

We then construct a fractional hypertree decomposition of $\Phi'$ by doing the following: we set $B_t' = (B_t\setminus \bigcup_{i\in I_t}V_i)\cup \bigcup_{i\in I_t} S_i$ for each bag $B_t$ where $I_t:= \{i \mid B_t\cap V_i \ne 0\}$. For each bag $B_t$ we construct a fractional edge cover $\psi_t'$ of $B_t'$ by setting $\psi_t'(e) := \psi_t(e)$ for all old edges and setting $\psi_t(S_i) = 1$ for $i\in  I_t$ where $S_i$ corresponds to the newly added constraint $\phi_i$ with $B_t\cap V_i \ne 0$. The result is indeed a fractional edge cover of width at most $k'$, because each variable not in any $S_i$ is still covered as before and the variables in $S_i$ are covered by definition of $\psi_t$. Furthermore, we claim that the width of the cover is at most $k'$. Indeed, for each $i\in I$ we had for each $v\in V_i$ $\sum_{e\in E: v\in e} \psi(e) \ge 1$. None of these edges appears in the new decomposition anymore. Thus adding the edge $S_i$ with weight $1$ does not increase the total weight of the cover. It is now easy to see that doing this construction for all $B_t$ leads to a fractional hypertree decomposition of $\Phi'$ of width at most $k'$.

Applying Lemma \ref{lem:fractQF} concludes the proof.
\end{proof}

\subsection{Computing independents sets}
Also $S$-star size or equivalently independent sets of bounded fractional hypertree width hypergraphs can be computed efficiently.

\begin{lemma}\label{lem:enumerateInd}
 The independent sets of a hypergraph $\calH=(V,E)$ can be enumerated in time $|\calH|^{O(\rho^*_\calH(V))}$.
\end{lemma}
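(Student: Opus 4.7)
The plan is to reduce the enumeration of independent sets of $\calH$ to the enumeration of answers of a carefully constructed $\CQ$ instance $\Phi$ whose associated hypergraph is exactly $\calH$, and then invoke Theorem~\ref{thm:enumerate}. One may assume without loss of generality that $\calH$ has no isolated vertex, since otherwise $\rho^*_\calH(V)$ is infinite and the claim is vacuous.

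Take a fresh symbol $\star \notin V$ and set the domain $D := V \cup \{\star\}$. For every $v\in V$ introduce a free variable $x_v$, and for every edge $e = \{v_{i_1},\ldots,v_{i_m}\}\in E$ introduce an atom $P_e(x_{v_{i_1}},\ldots,x_{v_{i_m}})$ whose relation consists of the all-$\star$ tuple together with, for each $j\in[m]$, the tuple having $v_{i_j}$ in the $j$-th coordinate and $\star$ everywhere else. Each such relation contains $|e|+1$ tuples, so $\|\Phi\| = O(\|\calH\|^2) = |\calH|^{O(1)}$, and by construction the hypergraph of the formula is exactly $\calH$. An assignment $a$ satisfies every $P_e$ iff on each edge $e$ at most one variable $x_v$ with $v\in e$ takes the value $v$ and all the remaining variables on $e$ are assigned $\star$; hence $a \mapsto I_a := \{v\in V : a(x_v)=v\}$ is a bijection between the answers of $\Phi$ and the independent sets of $\calH$.

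Applying Theorem~\ref{thm:enumerate} to $\Phi$, whose hypergraph is $\calH$, all answers can be enumerated in time $\|\Phi\|^{\rho^*(\calH)+O(1)} = |\calH|^{O(\rho^*_\calH(V))}$. Reading off $I_a$ from each enumerated $a$ is linear, which gives the promised enumeration of independent sets within the same bound.

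The main point to get right is that the hypergraph of $\Phi$ coincides with $\calH$, so that the width parameter appearing in Theorem~\ref{thm:enumerate} is exactly $\rho^*_\calH(V)$ rather than something larger; this is ensured by using one atom per edge on exactly the prescribed variables. A minor but genuine subtlety is the encoding of the relations: the choice of the auxiliary value $\star$ and the "pointer-to-self" value $v$ for each variable $x_v$ is what forces solutions to correspond to independent sets (and not merely to tuples avoiding some forbidden combinations), and it also keeps each $P_e^\calA$ of size linear in $|e|$, so that $\|\Phi\|$ stays polynomial in $|\calH|$.
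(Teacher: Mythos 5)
Your proposal is correct and takes essentially the same route as the paper: both encode independent sets as solutions to a CQ whose hypergraph is exactly $\calH$, with one atom per edge and a relation of size $O(|e|)$, then invoke Theorem~\ref{thm:enumerate}. The paper simply uses the domain $\{0,1\}$ with $\calR_e$ containing all tuples with at most one $1$ (characteristic vectors), whereas you use the domain $V\cup\{\star\}$ with a ``pointer-to-self'' encoding; these are cosmetic variants of the same construction.
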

\begin{proof}
 Let $\calH=(V,E)$. We construct a conjunctive query $\Phi$ with the hypergraph $\calH$. Let $V$ be the variables of $\Phi$, $\{0,1\}$ the domain and add a relation $\calR_e$ for each $e\in E$. The relation $\calR_e$ has all tuples that contain at most one $1$ entry. Finally, $\Phi$ has the formula $\bigwedge_{e\in E} \calR_e(e)$.
 
 Clearly, $\Phi$ has indeed the hypergraph $\calH$. Furthermore the solutions of $\Phi$ are exactly the characteristic vectors of independent sets of $\Phi$. Thus we can enumerate all independent sets of $\calH$ in time $|\calH|^{O(\rho^*)}$ with Theorem \ref{thm:enumerate}.
\end{proof}

\begin{lemma}\label{lem:fractionalISstatement}
 There is an algorithm that given a hypergraph $\calH=(V,E)$ of fractional hypertree width at most  $k$  computes a maximum independent set of $\calH$ in time $|\calH|^{k^{O(1)}}$.
\end{lemma}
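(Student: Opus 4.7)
The plan is to adapt the dynamic programming of Proposition \ref{lem:indsetnk} to fractional hypertree decompositions, with the bag-local enumeration replaced by an application of Lemma \ref{lem:enumerateInd}. First, I would invoke Theorem \ref{thm:approxFHW} with $w=k$ to compute in time $\|\calH\|^{O(k^3)}$ a fractional hypertree decomposition $(\calT, (\chi_t)_{t\in T}, (\psi_t)_{t\in T})$ of $\calH$ of width at most $k':=7k^3+31k+7 = k^{O(1)}$ (in the other alternative of Theorem \ref{thm:approxFHW} we would have $\fhw(\calH) \geq k$, contradicting the hypothesis).

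Next, for each bag $\chi_t$, the restriction of $\psi_t$ induces a fractional edge cover of $\chi_t$ in the induced subhypergraph $\calH[\chi_t]$ of weight at most $k'$, so $\rho^*_{\calH[\chi_t]}(\chi_t) \le k'$. By Lemma \ref{lem:enumerateInd} applied to $\calH[\chi_t]$, I can therefore enumerate in time $|\calH|^{O(k')}$ all independent sets of $\calH$ whose vertex set lies in $\chi_t$; in particular there are at most $|\calH|^{k^{O(1)}}$ such independent sets per bag. I then run a bottom-up dynamic program on $\calT$: for each bag $\chi_t$ and each independent set $\sigma \subseteq \chi_t$ of $\calH[\chi_t]$, I compute a maximum independent set $I_{t,\sigma}$ of $\calH[\chi(\calT_t)]$ with $I_{t,\sigma} \cap \chi_t = \sigma$, via the recursion
\[ I_{t,\sigma} \;:=\; \sigma \;\cup\; \bigcup_{i=1}^r I_{t_i,\sigma_i}, \]
where $t_1,\ldots,t_r$ are the children of $t$ and each $\sigma_i$ is chosen independent in $\chi_{t_i}$, compatible with $\sigma$ on $\chi_t \cap \chi_{t_i}$, and maximizing $|I_{t_i,\sigma_i}|$. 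The correctness of this combination is verified as in the proof of Proposition \ref{lem:indsetnk}: the connectivity condition together with the property that every edge of $\calH$ is contained in some bag guarantees that a vertex of $\chi(\calT_{t_i})\setminus \chi_t$ and a vertex of $\chi(\calT_{t_j})\setminus \chi_t$ with $i\ne j$ cannot lie in a common edge, and the same argument rules out conflicts between $\sigma$ and the child contributions outside of $\chi_t$.

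The main obstacle is that, unlike generalized hypertree decompositions, the bags of a fractional hypertree decomposition can have unbounded cardinality, so one cannot afford to iterate over all $2^{|\chi_t|}$ subsets of a bag as in the proof of Proposition \ref{lem:indsetnk}. Lemma \ref{lem:enumerateInd} is exactly the ingredient that sidesteps this difficulty: the bounded fractional cover number of $\chi_t$ caps the number of candidate independent sets in $\chi_t$ at $|\calH|^{O(k')}$, which also bounds the enumeration time. Summing over polynomially many bags and iterating the recursion yields an overall running time of $|\calH|^{k^{O(1)}}$, as claimed.
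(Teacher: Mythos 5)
Your proof is correct and takes essentially the same approach as the paper's: compute a fractional hypertree decomposition of width $k^{O(1)}$ via Theorem~\ref{thm:approxFHW}, enumerate the independent sets in each bag via Lemma~\ref{lem:enumerateInd}, and run the dynamic program from Proposition~\ref{lem:indsetnk} over the tree. The paper's own proof is only a brief sketch, so your write-up in fact supplies more of the missing details (the explicit $k'=7k^3+31k+7$, the bound $\rho^*_{\calH[\chi_t]}(\chi_t)\le k'$, and the verification that the correctness argument of Proposition~\ref{lem:indsetnk} only uses the connectivity condition and the fact that every edge is contained in some bag, both of which hold for fractional hypertree decompositions).
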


\begin{proof}
 Dynamic programming along a fractional hypertree decomposition. 
  In a first step we compute a fractional hypertree decomposition $(\calT, (B_t)_{t\in T}, (\psi_t)_{t\in T})$ of width $k' = k^{O(1)}$ of $\calH$ with Theorem \ref{thm:approxFHW}. For each bag $B_t$ we then compute  all independent sets of $\calH[B]$ with Lemma \ref{lem:enumerateInd}, call this set $I_t$.
 
 By dynamic programming similar to the proof of Lemma~\ref{lem:indsetnk} we then compute a maximum independent set of $\calH$.\stefan{OK, we should provide more details at some point, but I am very optimistic this works.}
\end{proof}

\section{Conclusion}

The results of this paper give a clear picture of tractability for counting solutions of conjunctive queries for structural classes that are known to have tractable decision problems. Essentially counting is tractable if and only if these classes are combined with quantified star size. So to find more general structural classes that allow tractable counting, progress for the corresponding decision question appears to be necessary.

Another way of generalizing the results of this paper would be extending the logic that the queries can be formulated in. Just recently Chen and Dalmau \cite{ChenD12} have characterized the tractable classes of bounded arity $\mathrm{QCSP}$ which is essentially a version of $\CQ$ in which also universal quantifiers are allowed. They do this by introducing a new width measure for first order $\{\forall, \exists, \land\}$-formulas. We conjecture that their width measure also characterizes the tractable cases for $\mathrm{\#QCSP}$, i.e.\ tractable decision and counting coincide here. It would be interesting to see how far this can be pushed for the case of unbounded arity. 

Another extension appears in a recent paper by Chen~\cite{Chen12} where he considers existential formulas that may use conjunction and disjunction. This is particularly interesting, because it corresponds to the classical select-project-join queries with union that play an important role in database theory. One may wonder if Chen's results may be extended to counting, too.


\paragraph*{Acknowledgements} The authors are grateful for the very helpful feedback on this paper they got 
from the reviewers of the conference version.

\bibliographystyle{abbrv}
\bibliography{bibfiledecomp}

\end{document}